\documentclass{article}
\usepackage{amsmath}
\usepackage{amsthm}
\usepackage{enumitem}
\usepackage{graphicx}
\usepackage{subcaption}
\usepackage{xcolor}
\usepackage{amsfonts} 
\usepackage{verbatim}
\usepackage{xfrac}

\usepackage{natbib}

\newtheorem{lemma}{Lemma}
\newtheorem{theorem}{Theorem}
\newtheorem{definition}{Definition}
\newtheorem{proposition}{Proposition}

\DeclareMathOperator{\interior}{int}
\newcommand{\mc}[1]{\mathcal{#1}} % Shortcut for mathcal
\newcommand{\eqn}[1]{\eqref{eqn:#1}} % Refer to an equation
\newcommand{\alttime}{u}

\title{Departure Time Choice with Parametric Heterogeneity: Equilibrium and Instability}
\author{Hillel Bar-Gera\footnote{Ben Gurion University},  Stephen D. Boyles\footnote{The University of Texas at Austin},  Liron Ravner\footnote{University of Haifa}}

\date{\today}

\begin{document}

\maketitle

\begin{abstract} 
Vickrey's classic single-bottleneck departure time choice equilibrium model exhibits instability under many plausible day-to-day learning dynamics. Such instability is not observed in reality --- does this difference stem from the day-to-day dynamics or from one of the simplifying assumptions of the basic model?  This paper explores a variant of the basic model with a continuous distribution of schedule delay parameters which we intuitively expect to have more favorable stability properties.  To attain tractability we assume a monotonic relationship between earliness and lateness parameters. 
We first verify the existence and uniqueness of the equilibrium solution for this model. We then study a broad class of day-to-day dynamics satisfying local pressure and order preservation conditions. Our main contribution is a formal proof that, surprisingly, all such day-to-day dynamics in this context are unstable. 
\end{abstract}

\section{ INTRODUCTION}

Vickrey's work on departure time choice inspired a substantial number of subsequent studies \citep{vickrey69}. A relatively recent bibliometric review identified 232 journal publications in 50 years that relied on Vickrey's work~\citep{li20}.
These subsequent studies have proceeded along many different lines.  Among these are extensions of Vickrey's model in which travelers are heterogeneous in some way (e.g., schedule delay parameters), and investigations of the stability properties of the departure time choice equilibrium under perturbations.  Stability is important both for practical and algorithmic reasons.  Practically, stable equilibria are more likely to be observed in the field.  Algorithmically, stability is helpful when designing iterative algorithms that converge to an equilibrium solution.  While the equilibrium in Vickrey's original model and in some extensions can be derived in closed form, numerical computations are important when embedding departure time choice in larger network models, or in other more complex extensions.  As described in more detail in the literature review, most stability results to date are negative, and indicate that the original Vickrey bottleneck is unstable under many natural dynamic processes.

These two streams of research (heterogeneity and stability) have proceeded more or less independently of each other, but we believe there is good reason to consider them together.  In many cases, traveler heterogeneity implies a unique departure order at equilibrium, whereas in the homogeneous model the travelers are fully interchangeable.  Furthermore, in the standard model, at equilibrium each traveler is indifferent among all possible departure times during the peak, whereas heterogeneous models can be formulated where each traveler's departure time has \emph{strictly} minimum cost.  Both of these properties intuitively suggest that stability is easier to achieve in a heterogeneous setting.
The main contribution of our paper is to investigate this intuition rigorously.
Surprisingly, we find that it is not true, and that equilibrium is unstable in such settings.

More specifically, we explore a variant of the basic model with a continuous distribution of schedule delay parameters, and investigate stability properties of the equilibrium solution.  
We model traveler heterogeneity with a monotonic relationship between the earliness and lateness parameters.
As a result, the general case where the parameter space is two-dimensional is replaced with a single-dimensional parameter space.
We believe such a model is plausible behaviorally, and considerably simpler to analyze than a general joint distribution of schedule delay parameters.
We prove equilibrium existence and uniqueness under strict monotonicity, and show that the travelers' order at equilibrium is known \emph{a priori}. 

We then consider a general family of day-to-day dynamics, or iterative processes.  Our analysis covers any iterative process that satisfies two assumptions: i) the \emph{a priori} order is maintained throughout the iterative process; and ii) each traveler acts myopically to improve her own conditions.
A formal definition of these assumptions, as well as the motivations for adopting them, will be discussed in Section \ref{Sec:InstabilityDiscrete}.

Our main finding is that any such iterative process cannot guarantee convergence to equilibrium, even within arbitrarily small neighborhoods around the equilibrium. In that sense, the equilibrium solution is fundamentally unstable, despite having a unique departure order, and despite each traveler departing at a time which is uniquely minimal for her.  Our main finding is thus also negative, and suggests that the cause of instability in departure time choice models is more fundamental.  Implications for other variations of departure time choice models, including relaxing the monotonicity assumption, are presented in the discussion section at the end of the paper.

\section{Literature review}
\label{sec:litreview}

In Vickrey's original model, all travelers are identical in their value of time, and in their scheduling preferences.
Since then, researchers have generalized Vickrey's model to consider different kinds of heterogeneity 
among travelers.
This can include variations in the desired arrival time, as in \cite{hendrickson81}, \cite{smith84}, and \cite{daganzo85}, or in the penalty factors for queueing delay, early arrival, and late arrival, as in \cite{newell87},  \cite{arnott88}, and \cite{arnott94}.
Both continuous and discrete forms of heterogeneity have been studied; see Section 3.1.6 and Table 6 in \cite{li20} for a comprehensive review of traveler heterogeneity in the single bottleneck model.

In many heterogeneous formulations, travelers obey an ordering principle at equilibrium. 
An interesting framework for demonstrating travelers' order in heterogeneous departure time choice models was proposed by \cite{aka21}, combining results of several previous studies on this topic. Their analysis relied primarily on the equivalence between the system optimum solution for departure time choice and the linear formulation of the `total transport' problem, with a continuous time dimension. They discussed sufficient conditions on the schedule cost derivative for interpreting Lagrange multipliers as queue length, thus demonstrating the equivalence between the system optimum and user equilibrium solutions. 
They also showed that if schedule costs satisfy the Monge condition, for a certain permutation of traveler classes, then arrivals at equilibrium must follow a corresponding `sorting pattern'. 
This is true in one-dimensional heterogeneity such as: desired arrival time; earliness penalty (without late arrivals); and lateness penalty (without early arrivals). In each of these cases additional assumptions were made (e.g. convex or quasi-convex schedule cost). In the case of two-dimensional heterogeneity with respect to both earliness and lateness penalties, a sorting pattern occurs among early arrivals and among late arrivals, due to properties of two subproblems of the main system optimal problem.

Although many researchers have studied heterogeneity in different forms, comparatively few have addressed the issue of solution stability. Only two of the studies listed in the review of \cite{li20} focus on the day-to-day dynamics \citep{gou18b, zhu19}, and both assumed bounded rationality.
To our knowledge, the first to consider the stability of day-to-day learning dynamics in the context of departure time choice, and particularly in the context of Vickrey's bottleneck model, was \cite{depalma00}.
In his paper, de~Palma proposed four dynamic evolution strategies, aiming to shift travelers towards lower-cost departure times, and evaluates their performance numerically.
His experiments showed that these dynamics often lead to oscillatory behavior that fails to converge; one exception is a logit-based dynamic which was seen to converge to a stochastic user equilibrium when unobserved costs are sufficiently high relative to queueing and schedule delay (a paradoxical finding, since convergence to equilibrium seems easiest when travelers care about it the least).
\cite{iryo08} provided numerical evaluation of the `proportional-switch' dynamics. Subsequently, \cite{iryo19} proved that a `continuous-day' version of the replicator dynamics is unstable. \cite{gou18} extended these results and proved the lack of stability for the `continuous-day' versions of additional dynamics: proportional-switch, the \emph{t\^{a}tonnement} process, the simplex gravity, the projected dynamical system, and the evolutionary traffic dynamics. They also provided extensive numerical examples illustrating the lack of stability in iterative (discrete-day) versions of these dynamics. 

The instability results in the departure time choice literature stand in contrast to classical game theoretic results on convergence of rational evolutionary dynamics, as discussed in detail by \cite{sandholm2010}. Notably, best response dynamics are known to converge to equilibrium in potential games \citep{monderer1996} and congestion games \citep{monderer1996}. 

If travelers' rationality is bounded, convergence may emerge \citep{gou18b, zhu19}, but to solutions that may deviate considerably from the user equilibrium \citep[Fig.~1][]{daly21}. 
\cite{lamotte21} studied the underlying causes for instability, deriving conditions under which the cost function mapping is monotone, as it is easier to develop stable dynamics in such cases. \cite{daly21} explored unique dynamics that exhibited relatively stable behavior. However, the dynamics studied by \cite{daly21} require a large number of iterations to converge. Moreover, the analysis lacks a proof of convergence and presents only numerical results.

\cite{jin21} studied stability of dynamics for the homogeneous user model, in the continuous limit based on a discrete model of switching departure times among time slots.  These dynamics are local, in the sense that the departure rate in each interval changes only if there is cost improvement in the adjacent time slots.  Convergence to equilibrium is seen in his numerical experiments.  Generalizing these dynamics to other variants of the basic model is nontrivial, as several parameters must be chosen to satisfy several conditions.

A number of empirical studies have shown that traffic patterns are relatively stable from day to day, after controlling for factors such as day of week, season, and the presence of incidents~\citep{rakha95,weijermars05,wang15,crawford17}.  In this light, the challenge of finding stable dynamic processes for departure time choice equilibrium is especially puzzling.  Either finding behaviorally-plausible learning dynamics that lead to equilibrium, or understanding what specific features of bottleneck or behavior models lead to instability in learning dynamics, would be very valuable in connecting theoretical traffic models to observed evidence.

To summarize, the few studies that examined the issue raise doubts about the stability of Vickrey's original departure time choice model under plausible day-to-day learning dynamics. Such instability is not observed in reality. The key question is whether the difference between reality and model results stems from some specific features of the day-to-day dynamics that have been studied, or from one of the simplifying assumptions of the basic model.

In particular, we think it plausible that traveler homogeneity may contribute to instability.
As discussed in the papers above, in heterogeneous flavors of Vickrey's model there is often an ordering property which must hold at equilibrium, and each traveler departs at a time of \emph{strictly} minimal cost for herself.
The homogeneous model has no such natural ordering, and at equilibrium each traveler is indifferent among any departure time during the peak period.
These observations suggest that equilibrium in the heterogeneous case may be more stable to perturbations than in the homogeneous case, motivating the model we present below.

\section{Model and notation}
\label{sec:model}

We model a single bottleneck that discharges at capacity $s$ whenever vehicles are waiting in queue.
A total of $N$ travelers must select a departure time in order to minimize their generalized cost, which is a function of queueing delay and arrival time relative to a preferred time.
We measure time relative to this preferred arrival time, so positive time values correspond to late arrival, and negative time values to early arrival.
Many models of this type have been studied.
We adopt an individual cost function of the typical form
\[
C(t) = Q(t) + \beta [-t]^+ + \gamma[t]^+,
\,\]
where $C(t)$ is the generalized cost for a traveler arriving at time $t$; $Q(t)$ is the queueing delay experienced by a traveler arriving at $t$; $\beta$ and $\gamma$ are the penalty parameters, that is relative per-minute costs of early and late arrival, compared to the per-minute cost of queueing delay; and $[\cdot]^+ = \max\{\cdot, 0\}$ is the positive part of its argument.

We consider a case where the values of $\beta$ and $\gamma$ vary over the population, but in a way that can be parameterized by a single variable, rather than the more general case of a two-dimensional distribution.
The earliness penalty distribution is specified by the cumulative distribution function (CDF) $F_\beta$, or equivalently by a scaled inverse function, $\beta(n) = F_\beta^{-1}(n/N)$.
By definition the function $\beta(n)$ is monotone increasing.
We further assume that it is continuous, implying that the support of $\beta$ is a single real interval.
For reasons explained below, we will require that this support be contained within the interval $(0, 1)$, that is, that queueing delay is always strictly worse than arriving early by an equivalent amount.

Similarly, the lateness penalty distribution is given by $F_\gamma$, or equivalently by the survival function (complementary CDF) $\tilde{F}_\gamma = 1 - F_\gamma$.
The distribution can also be represented by a scaled inverse function $\gamma(n) = \tilde{F}_\gamma^{-1}(n/N)$, which is a monotone decreasing function.
As with $\beta(n)$, we assume that $\gamma(n)$ is continuous, and therefore has connected support.
The support of $\gamma$ is contained in $(0, \infty)$.

\begin{figure}
\begin{subfigure}{0.45\textwidth}
  \centering
    \includegraphics[scale=0.7]{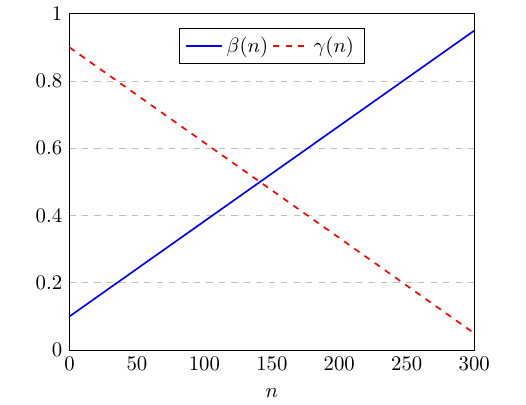}\label{fig:beta_gamma_linear}
\end{subfigure}
\begin{subfigure}{0.45\textwidth}
  \centering
\includegraphics[scale=0.7]{ 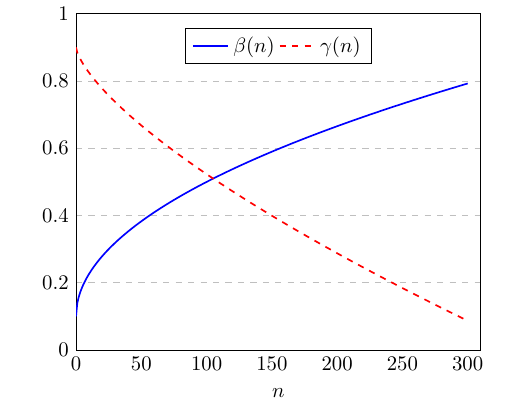}\label{fig:beta_gamma_nonlinear}
\end{subfigure}
\caption{The figure illustrates two examples of penalty functions (linear on the left and non-linear on the right). For every type $n\in[0,300]$, the solid blue line represents the earliness penalty and the dashed red line represents the lateness penalty. A higher index $n$ indicates a higher earliness penalty and lower lateness penalty, and vice versa. }\label{fig:penaltyexample}
\end{figure}

Our key assumption is that the same continuous index $n \in [0,N]$ can be used to represent both a traveler's early and late arrival penalty values. (See Figure~\ref{fig:penaltyexample}.)

That is, $\beta(n)$ and $\gamma(n)$ both reflect the same infinitesimal traffic fraction.
Effectively, we assume that ordering travelers by increasing value of $\beta$ is the same as ordering them by decreasing value of $\gamma$.
We discuss the implications of this assumption at greater length at the end of this section.
We will show below that this ordering --- from travelers with lowest $\beta$ (and highest $\gamma$) to travelers with the lowest $\gamma$ (and highest $\beta$) --- must also be the order in which travelers arrive in an equilibrium solution, but will first define our model more generally.

Each traveler chooses a time at which to depart.
However, as we show below, we can formulate the equilibrium and model in terms of \emph{arrival} times instead, which facilitates explanation and analysis.
We will use the notation $t(n)$ to refer to the time at which the traveler of index $n$ arrives, and $n(t)$ to refer to the index of a traveler arriving at time $t$.
These functions are defined as inverses of each other, so specifying one is equivalent to specifying the other.

The generalized cost function for traveler $n$ arriving at time $t$ assuming queuing profile $Q(t)$ is given by
\begin{equation}
\label{eqn:costfunction}
C_n(t) = Q(t) + \beta(n) [-t]^+ + \gamma(n) [t]^+ 
\,.
\end{equation}

We will represent equilibrium in terms of the queue evolution $Q(t)$, and the traveler-to-arrival time mapping $t(n)$.
It is more common to represent equilibrium in terms of the departure time choice, rather than in these terms.
As we show in Appendix~\ref{sec:choicerepresentation}, defining equilibrium in terms of departure time, or in terms of $Q(t)$ and $t(n)$, is equivalent, and the latter representation is more convenient for analysis.
We denote a queueing delay profile by $Q:=\{Q(t): t\in\mathcal{T}\}$, and assume that it is defined on the interval $\mathcal{T} = [-N/s, N/s]$, and its support must have a total width no greater than $N/s$. 

These conditions enforce that the bottleneck discharges at capacity whenever the queue is active.
The queue obeys first-in, first-out discipline, which requires that if $t_1 < t_2$, we must have $Q(t_2) - Q(t_1) < t_2 - t_1$.
Positive jumps are therefore not allowed, but negative jumps are possible and correspond to a gap in departures while the queue is active.
This condition also enforces $Q'(t) < 1$ wherever $Q$ is differentiable.
Under these conditions, we define equilibrium as follows.

\begin{definition}
\label{def:equilibrium}
The functions $Q(t)$ and $t(n)$ are an equilibrium if
\begin{equation}
\label{eqn:equilibrium}
C_n(t(n)) = \min_{\alttime \in \mathcal{T}} \{ C_n(\alttime) \},
\end{equation}
for all $n \in [0, N]$,
with $C_n$ given in terms of $Q$ by equation~\eqn{costfunction}.
\end{definition}
The classic Vickrey bottleneck is a limiting case of our model, by taking the support of $\beta$ and $\gamma$ to be a single point, i.e., $\beta(n)=\beta$ and $\gamma(n)=\gamma$ for all $n$.

Figure~\ref{fig:cost} illustrates the individual optimality condition for the different traveler types. 
Each plotted line represents the cost function $C_n(t)$ for a specific traveler $n$, with penalties $\beta(n)$ and $\gamma(n)$, as a function of all possible arrival times. At equilibrium, the actual arrival time $t(n)$ attains the global minimum for the contour of every traveler.
The example used to generate this figure assumes linear earliness and lateness penalty functions, as demonstrated next.

\begin{figure}
  \centering
    \includegraphics[scale=0.7]{ 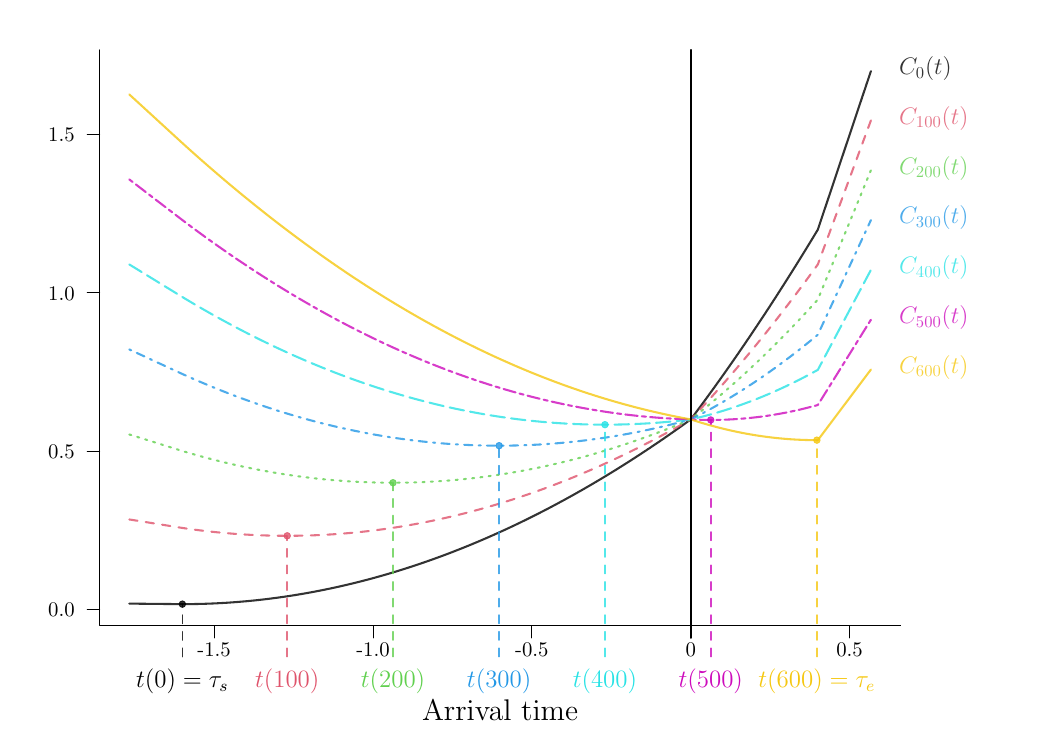}
    \caption{The cost functions $C_n(t)$ are plotted for different traveler types. In equilibrium, the minimal cost for every type $n$ traveler is attained at $t(n)$.}\label{fig:cost}
\end{figure}

\begin{figure}
  \centering
    \includegraphics[width=11cm, height=7cm]{ 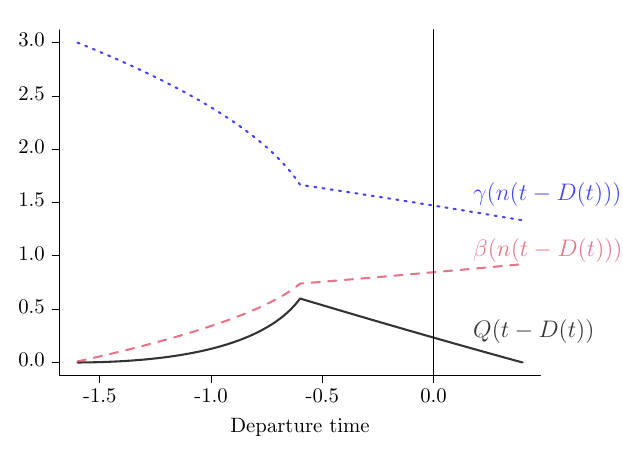}
    \caption{The equilibrium queue, earliness and lateness penalties of travelers as a function of their departure time.}\label{fig:dep_queue}
\end{figure}

As a specific demonstration of our model, assume that there are $N = 600$ travelers, and the bottleneck capacity is $s = 300$.
The early and late penalty factors are distributed uniformly between $[0.01, 0.91]$ and $[1.2, 3]$ respectively, so
\begin{align*}
\beta(n) &= 0.01 + 0.0015n \\
\gamma(n) &= 3 - 0.003n \,.
\end{align*}

In this case, for a given initial departure time $\tau_s$, the queueing profile
\[
Q(t) =
\begin{cases}
0.01 (t - \tau_s) + 0.225 (t - \tau_s)^2 & t \in [\tau_s, 0] \\
(0.225 \tau_s^2 - 0.01 \tau_s) + 0.45 t^2 -  (3 + 0.9 \tau_s)t  & t \in [0, \tau_e]
\end{cases}
\,,
\]
is a user equilibrium if travelers arrive in order of increasing $\beta$ (equivalently, decreasing $\gamma$), with $\tau_e = \tau_s + N/s = \tau_s + 2$, and with $\tau_s$ being the solution to the quadratic equation $Q(\tau_e) = Q(\tau_s + 2) = 0$, approximately $-1.58$.
This traveler order can be specified by the arrival time function $t(n) = -1.58 + n/300$.
Figure~\ref{fig:cost} shows the individual cost curves for several travelers, indicating that each traveler arrives at a time of minimum cost for their schedule delay parameters.
Our assertion that these $Q(t)$ and $t(n)$ functions represent equilibrium for all travelers (not just those marked in the figure) can be verified using the methods described in the following section. 

Using the equilibrium values for $Q(t)$ and $n(t)$, we can recover the other ways of representing the travel choice from Appendix~\ref{sec:choicerepresentation}:
\begin{itemize}
\item The travelers depart in order, so the ``permutation'' $\phi$ is simply the identity function $\phi(n) = n$.
\item To obtain the index of a traveler arriving at time $t$, we invert $t(n)$ to obtain $n(t) = 300(t + 1.58)$ for $t \in [-1.58, 0.42]$.
\item The departure time for a traveler arriving at time $t$ is computed as 
\begin{align*}
D(t) = t - Q(t) =
\begin{cases}
-0.225 (t + 1.58)^2 + 0.99t - 0.0158 & t \in [-1.58, 0] \\
-0.45 t^2 + 2.576 t - 0.579 & t \in [0, 0.42] 
\end{cases}
\,.
\end{align*}
\item Because the travelers depart in order, the cumulative arrivals by time $t$ are $\nu_A(t) = 300(t + 1.58)$ for $t \in [-1.58, 0.42]$; with $\nu_A(t) = 0$ beforehand and $\nu_A(t) = 600$ afterwards.
\item The departure time for the $n$-th traveler is
\[
\tau(n) = t(n) - Q(t(n)) =
\begin{cases}
-(2.5 \times 10^{-6}) n^2 + 0.0033n - 1.58 & n \in [0, 474] \\
-(5 \times 10^{-6}) n^2 + 0.0133 n - 5.78 & n \in [474, 600] \\
\end{cases}
\]
\item The cumulative departures by time $t$ are found by inverting $\tau(n)$; the solution to the resulting quadratic equation is straightforward but messy, so not shown here. Figure~\ref{fig:departure} illustrates the equilibrium cumulative arrivals and departures, and the corresponding departure rate.
\end{itemize}

\begin{figure*}[t!]
    \centering
    \begin{subfigure}[t]{\textwidth}
        \centering
        \includegraphics[width=10cm, height=5cm]{ 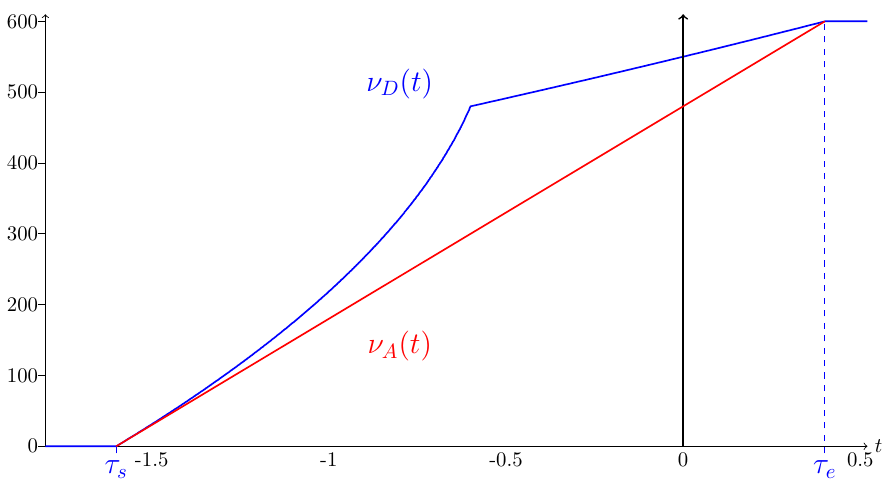}
        \caption{Cumulative arrivals and departures. }
    \end{subfigure}
    
    \begin{subfigure}[t]{0.95\textwidth}
        \centering
        \includegraphics[width=10cm, height=5cm]{ 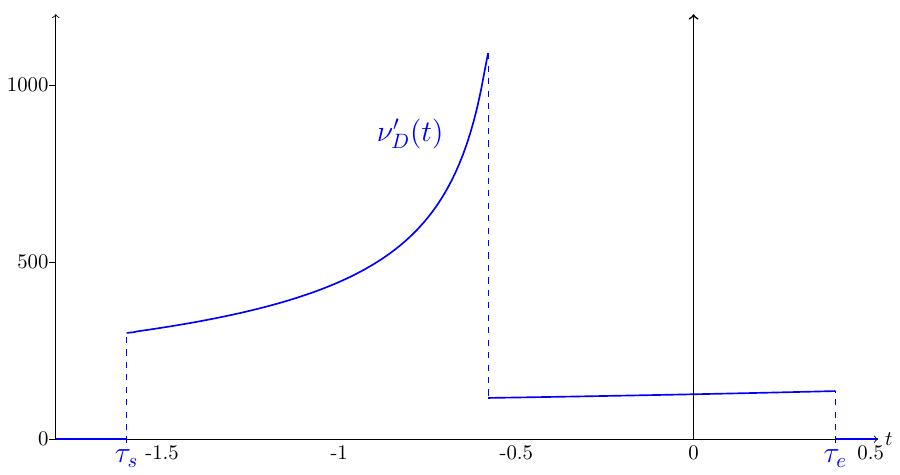}
        \caption{Departure rate.}
    \end{subfigure}
    \caption{Equilibrium departure and arrival dynamics.}\label{fig:departure}
\end{figure*}

The following section takes up the general question of equilibrium existence and uniqueness for our model.

The major assumption in our model is a ``monotonicity'' assumption, that a traveler index $n$ can be defined in such a way that $\beta(n)$ is increasing and $\gamma(n)$ is decreasing; that is, travelers with high early penalty weight must have low late penalty weight, and vice versa.
This assumption simplifies the analysis, compared to an arbitrary joint distribution of $\beta$ and $\gamma$, and we believe it is behaviorally plausible.
Vickrey's original formulation of the bottleneck model assumed that the schedule delay parameters $\beta$ and $\gamma$ arise from differences in monetary value of time under four different conditions: (1) at home, $\mu_h$; (2) while traveling, $\mu_t$; (3) at the destination prior to the preferred time, $\mu_{w1}$; and (4) at the destination after the preferred time, $\mu_{w2}$.
From this, we have $\alpha = \mu_t$ as the per-time unit cost of delay (we select units so that $\mu_t = 1$), $\beta = \mu_h - \mu_{w1}$ as the unit cost of early arrival, and $\gamma = \mu_{w2} - \mu_h$ as the unit cost of late arrival.
Traveler heterogeneity arises from many demographic factors, employment situations, and personal preferences, and can be modeled by distributions over $\mu_h$, $\mu_t$, $\mu_{w1}$, and $\mu_{w2}$.
One potential source of variability is the relative value of time spent at home $\mu_h$, compared to arriving early or late to work --- for instance, due to flexibility with remote work.
If variability in $\mu_h$ is the predominant source of heterogeneity, then we would expect $\beta = \mu_h - \mu_{w1}$ and $\gamma = \mu_{w2} - \mu_h$ to be negatively correlated.
In particular, assuming that heterogeneity in the other $\mu$ cost parameters is negligible compared to that in $\mu_h$ yields our monotonicity assumption.
To be sure, other assumptions on the sources of heterogeneity would yield different models, and we do not claim that our assumption is the only useful or valid way to model traveler heterogeneity in departure time choice.

\section{Equilibrium existence and uniqueness}
\label{sec:properties}

In this section we show existence of a unique equilibrium solution (functions $Q(t)$ and $t(n)$) to our model, assuming only that the functions $\beta(n)$ and $\gamma(n)$ are strictly monotone and continuous functions.
We do this by first considering a family of solutions where travelers depart in order of $n$ over a single interval, and with $Q(t)$ chosen to be a piecewise-differentiable function satisfying a first-order local optimality condition almost everywhere.
%Quote Index: R2_C6_a
Note that the equilibrium queueing profile may be non-differentiable, for example when the queue starts or ends, or at the peak of the queue)
We will show that one of these profiles satisfies the equilibrium principle, demonstrating existence.
In this first part of the section, we make no claim that all equilibria must share the ordering and other regularity properties of the equilibrium we constructed.
However, following the existence argument, we will show that these properties are indeed required, and that the solution we constructed initially is the only possible equilibrium solution, demonstrating uniqueness.

We start by proposing a solution of the following form, and will show that it is an equilibrium: travelers depart (and arrive) in order, over a single interval of duration $N/s$, with a queueing profile chosen to satisfy a first-order optimality condition.
Suppose that arrivals and departures occur with a single interval $[\tau_s, \tau_e]$, with $N_1 = -s \tau_s$ early arrivals and $N - N_1 = s\tau_e$ late arrivals.
In this case, \begin{align}
n(t) & = N_1 + ts, & t \in [\tau_s, \tau_e] \label{eqn:nt}, \\
t(n) &= (n - N_1) / s, & n \in [0, N] \label{eqn:tn}.
\end{align}
If $Q(t)$ is differentiable at $t$, then the derivative of the generalized cost as observed by traveler $n$ with respect to arrival time can be computed by:
\begin{equation}
\label{eqn:firstorder_C}
C_n'(t) =
\begin{cases}
Q'(t)-\beta(n) & t < 0 \\
Q'(t)+\gamma(n) & t > 0 \\
\end{cases},
\end{equation}
At equilibrium, this derivative should vanish at $t(n)$ for all $n \in [0,N]$, thus 
\begin{equation}
\label{eqn:firstorder_Q}
Q'(t) =
\begin{cases}
\beta(n(t)) & t < 0 \\
-\gamma(n(t)) & t > 0 \\  
\end{cases}.
\end{equation}

(Here we use the assumption that the support of $\beta$ is contained in (0,1), ensuring $Q'(t) < 1$ as is required by first-in, first-out ordering.)
Since $Q(\tau_s)=0$, we get a candidate for the equilibrium queue profile,
\begin{equation}
\label{eqn:queueintegral}
Q(t) =
\begin{cases}
\displaystyle
\int_{\tau_s}^t \beta(n(\alttime))~d\alttime = \frac{1}{s} \int_{0}^{n(t)} \beta(n)~dn & t \in [\tau_s, 0] \\
\displaystyle
Q(0) - \int_0^t \gamma(n(\alttime))~d\alttime = Q(0) - \frac{1}{s} \int_{N_1}^{n(t)} \gamma(n)~dn  & t \in (0, \tau_e]
\end{cases}
\,.
\end{equation}

We first consider the choice of $\tau_s$; for a queueing profile to be physically feasible, we must have $Q(\tau_s + N/s) = 0$, so that the queue has discharged fully over the assumed arrival interval.
\begin{lemma}
\label{lem:uniquestarttime}
There is a unique initial arrival time $\tau_s$ such that the queueing profile $Q(t)$ given by~\eqn{queueintegral} is physically feasible.
\end{lemma}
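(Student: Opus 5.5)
We reduce the feasibility requirement to a single scalar equation in the number of early arrivals $N_1 = -s\tau_s$, and show that this equation has exactly one root in $[0,N]$. By \eqn{queueintegral}, the queue at the end of the interval, $Q(\tau_e)$ with $\tau_e = \tau_s + N/s$, equals
\[
G(N_1) := \frac{1}{s}\left( \int_0^{N_1} \beta(n)\,dn - \int_{N_1}^{N} \gamma(n)\,dn \right).
\]
Here we use that $n(t)$ runs from $0$ at $\tau_s$, to $N_1$ at $t=0$, to $N$ at $\tau_e$. Physical feasibility demands $Q(\tau_e)=0$, so we must show that $G$ has exactly one zero for $N_1\in[0,N]$. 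Each such zero gives the corresponding $\tau_s=-N_1/s$, and the map between $N_1$ and $\tau_s$ is a bijection. We restrict to $\tau_s\in[-N/s,0]$, i.e.\ $0\le N_1\le N$, because the formula \eqn{queueintegral} presupposes that time $0$ lies in $[\tau_s,\tau_e]$. For completeness we should also remark that a candidate with $\tau_s>0$ or $\tau_e<0$ cannot work. In those cases the queue is purely decreasing from $0$, or purely increasing from $0$ with $\beta>0$, so it never returns to zero.

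The key steps are as follows.

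\begin{enumerate}
\item \emph{Continuity and monotonicity.} Since $\beta,\gamma$ are continuous, $G$ is continuously differentiable with
\[
G'(N_1) = \frac{1}{s}\bigl(\beta(N_1)+\gamma(N_1)\bigr) > 0,
\]
because $\beta>0$ and $\gamma>0$ on their supports. Hence $G$ is strictly increasing, which gives uniqueness.

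\item \emph{Sign change.} At the endpoints,
\[
G(0) = -\frac{1}{s}\int_0^N \gamma < 0, \qquad G(N) = \frac{1}{s}\int_0^N \beta > 0.
\]
The intermediate value theorem then gives existence of a root $N_1^*\in(0,N)$. It follows that $\tau_s^*\in(-N/s,0)$, and that both early and late arrivals occur.
\end{enumerate}

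Beyond $Q(\tau_e)=0$, we must check the remaining feasibility conditions for the profile at $\tau_s^*$.

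\begin{itemize}
\item \emph{Nonnegativity.} $Q\ge 0$ on $[\tau_s,\tau_e]$. On the early part $Q$ increases from $0$. On the late part it decreases monotonically to $Q(\tau_e)=0$, so it stays nonnegative.
\item \emph{Support width.} The support has width $N/s$, as required.
\item \emph{FIFO condition.} We need $Q(t_2)-Q(t_1)<t_2-t_1$, which follows from $Q'\le\max\beta<1$ together with $Q'<0$ on the late part. Continuity of $Q$ at $0$ excludes positive jumps.
\end{itemize}

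The step requiring the most care is not analytic but definitional: identifying exactly which conditions ``physically feasible'' encodes, and confirming that $Q(\tau_e)=0$ is the binding one while the others hold automatically. The monotonicity and intermediate-value argument itself is routine.
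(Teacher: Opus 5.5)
Your proof is correct and follows the paper's argument: your $G$ is the paper's $\Delta(N_1)$ rescaled by $1/s$ (the paper drops this factor, which does not affect the root), and both proofs rest on continuity, strict monotonicity in $N_1$, and the sign change $\Delta(0)<0<\Delta(N)$. Your extra checks (nonnegativity, FIFO, and ruling out $\tau_s>0$ or $\tau_e<0$) go slightly beyond the paper's proof, which handles $Q'<1$ in the text just before the lemma and implicitly takes $0\le N_1\le N$.
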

\begin{proof}
The starting time $\tau_s$ is linked to the number of early arrivals by $N_1 = -\tau_s s$, and it will be easier to work in terms of $N_1$.
Define the function
\begin{equation}
\label{eqn:Delta}
\Delta(N_1) = \int_{0}^{N_1} \beta(n)~dn - \int_{N_1}^{N} \gamma(n)~dn
\,,
\end{equation}
which represents the value of $Q$ at the time of last arrival $\tau_e$, according to~\eqref{eqn:queueintegral}.
For the queue to be physically feasible, we must have $Q(\tau_e) = \Delta(N_1) = 0$.
(If $\Delta(N_1) > 0$, more than $N$ vehicles exit the queue, whereas if $\Delta(N_1) < 0$ then $Q$ is negative near the end of the arrival window.)
This function is strictly increasing in $N_1$: if $N_1$ increases, we integrate over a larger portion of the $\beta$ distribution in the first term, and a smaller portion of the $\gamma$ distribution in the second term.
Further, $\Delta(0) < 0$ (because the first integral vanishes), and $\Delta(N) > 0$ (because the second does).
Since $\Delta$ is a continuous and strictly monotone function, there is a unique value of $N_1$ for which $\Delta(N_1) = 0$.
The choice $\tau_s = -N_1/s$ then satisfies the requirement that the queue is empty at the end of the interval $[\tau_s, \tau_e]$.
\end{proof}

With this lemma in hand, we now show that the proposed profile is an equilibrium solution.
The profile $Q(t)$ was constructed to be locally optimal for travelers, but we must show that each traveler's assigned arrival time is also globally optimal.
\begin{theorem}
\label{thm:existence}
The arrival time ordering $t(n)$ given by~\eqn{tn} and the queueing profile $Q(t)$ given by~\eqn{queueintegral}, with $\tau_s$ chosen according to Lemma~\ref{lem:uniquestarttime}, satisfy the equilibrium principle.
\end{theorem}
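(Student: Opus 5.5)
We must show that for every $n \in [0,N]$ and every $\alttime \in \mathcal{T}$, $C_n(\alttime) \ge C_n(t(n))$. First we check that the profile is physically admissible. $Q$ is continuous, and $Q(\tau_s)=0$. On $[\tau_s,0]$ we have $Q'=\beta(n(t))\in(0,1)$, and on $[0,\tau_e]$ we have $Q' = -\gamma(n(t))<0$. Hence $Q$ increases and then decreases. By Lemma~\ref{lem:uniquestarttime}, $Q(\tau_e)=\Delta(N_1)/s=0$, so $Q \ge 0$ on $[\tau_s,\tau_e]$. Outside this interval we set $Q=0$, which respects FIFO and the width constraint $\tau_e-\tau_s=N/s$.

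The core of the argument is a comparison of $C_n(\alttime)-C_n(t(n))$ by integrating derivatives. On $[\tau_s,\tau_e]$, $Q$ is absolutely continuous. Write $g_n(\alttime) = C_n'(\alttime)$ wherever it is defined. For $\alttime<0$, \eqn{firstorder_C} and \eqn{firstorder_Q} give $g_n(\alttime) = \beta(n(\alttime)) - \beta(n)$. For $\alttime>0$, they give $g_n(\alttime) = \gamma(n) - \gamma(n(\alttime))$. Since $n(\cdot)$ is increasing, $\beta$ is increasing and $\gamma$ is decreasing, we get:
\begin{itemize}[label={}]
\item for $\alttime < t(n)$: $n(\alttime)<n$, so $g_n(\alttime)\le 0$ (if $\alttime<0$, because $\beta(n(\alttime))\le\beta(n)$; if $0<\alttime<t(n)$, because $\gamma(n(\alttime))\ge\gamma(n)$);
\item for $\alttime > t(n)$: symmetrically $g_n(\alttime)\ge 0$.
\end{itemize}
Note that $C_n$ is continuous at $0$, since the schedule terms vanish there. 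So $C_n$ is nonincreasing up to $t(n)$ and nondecreasing after it, and $t(n)$ minimizes $C_n$ over $[\tau_s,\tau_e]$. Strict monotonicity of $\beta,\gamma$ even gives strict minimality.

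Finally we treat $\alttime \in \mathcal{T}\setminus[\tau_s,\tau_e]$, where $Q=0$. For $\alttime<\tau_s$, $C_n(\alttime) = -\beta(n)\alttime > -\beta(n)\tau_s = C_n(\tau_s) \ge C_n(t(n))$. For $\alttime>\tau_e$, $C_n(\alttime)=\gamma(n)\alttime > \gamma(n)\tau_e = C_n(\tau_e)\ge C_n(t(n))$. Here we use $\tau_s\le 0\le\tau_e$, which follows from $0<N_1<N$ by the signs of $\Delta(0)$ and $\Delta(N)$.

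The main obstacle is the sign bookkeeping around $t=0$. When $t(n)$ and $\alttime$ lie on opposite sides of $0$, the derivative changes formula. The resolution is that the sign of $g_n$ depends only on whether $n(\alttime)$ is below or above $n$, in both regimes. The endpoints $n=0,N$ are also handled by this argument, because $t(0)=\tau_s$ and $t(N)=\tau_e$.
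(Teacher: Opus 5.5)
Your proof is correct and follows essentially the same route as the paper: both substitute the first-order condition defining $Q'$ into $C_n'(u)$, then use that $\beta$ is increasing and $\gamma$ is decreasing to show $C_n$ decreases before $t(n)$ and increases after it, regardless of which side of $0$ the two times lie on. Your explicit handling of arrival times in $\mathcal{T}\setminus[\tau_s,\tau_e]$ (where $Q=0$ and $n(u)$ is undefined), of continuity at $0$, and of $\tau_s<0<\tau_e$ fills in details that the paper's proof leaves implicit.
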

\begin{proof}
The queueing profile $Q(t)$ constructed in~\eqn{queueintegral} is continuously differentiable in the open intervals $(\tau_s,0)$ and $(0,\tau_e)$.
Therefore, the first-order equilibrium conditions, $C'_n(t(n))=0$, are satisfied for almost all $n$.

For any traveler $n$, the derivative of their cost profile at any arrival time $\alttime$ is
\begin{equation}
\label{eqn:costder}
C'_n(\alttime) =
\begin{cases}
Q'(\alttime) - \beta(n) & \alttime < 0 \\
Q'(\alttime) + \gamma(n) & \alttime > 0
\end{cases}
\,.
\end{equation}
This derivative vanishes if $\alttime = t(n)$.
If the traveler arrives early (so $t(n) < 0$) and $\alttime < t(n)$, then $C'_n(\alttime) < 0$ because $\beta$ is an increasing function: $C'_n(\alttime) = Q'(\alttime) - \beta(n) < Q'(\alttime) - \beta(n(\alttime)) = 0$, with the last equality following from the necessary condition~\eqn{firstorder_Q}.
If $t(n) < \alttime < 0$, then $C'_n(\alttime) > 0$ for the same reason: $C'_n(\alttime) = Q'(\alttime) - \beta(n) > Q'(\alttime) - \beta(n(\alttime)) = 0$.

If $\alttime > 0$, then $C'_n(\alttime)$ is still positive: $C'_n(\alttime) = Q'(\alttime) + \gamma(n) > Q'(\alttime) + \gamma(n(\alttime)) = 0$ because $\gamma$ is a decreasing function.
We therefore conclude that $C_n$ is strictly decreasing for $\alttime < t(n)$, and strictly increasing for $\alttime > t(n)$, so $t(n)$ is globally optimal, not just locally. The same argument holds for travelers who arrive late, \emph{mutatis mutandis}.

For a traveler arriving at $t = 0$, equations~\eqn{firstorder_Q} and~\eqn{costder} show that $C_{n(0)}(\alttime)$ is decreasing for $\alttime < 0$ and increasing for $\alttime > 0$, so in all cases travelers arrive at a globally optimal time.
\end{proof}

We now show that this departure order and queueing profile are the only ones representing equilibrium.
Our argument is distributed over several lemmas, and takes the following form:
\begin{itemize}
\item We define a schedule cost gap function, and show it is strictly increasing, a property helpful later in the argument. (Note: this is similar to the Monge property discussed by \cite{aka21}, but in our study both dimensions are continuous.) 
\item In any equilibrium solution, travelers must arrive in the order $t(n)$ given by~\eqn{tn} (from least to greatest $\beta$, and from greatest to least $\gamma$).
\item In any equilibrium solution, travelers must arrive in some single interval $[\tau_s, \tau_e]$ of length $N/s$, with the bottleneck discharging at capacity during the entire interval.
\item Within such an interval, the only possible equilibrium profile $Q(t)$ is one of the form~\eqn{queueintegral}, given the value of $\tau_s$.
\item There is exactly one value of $\tau_s$ which is feasible with the queueing dynamics.
As a result, the equilibrium presented above is unique.
\end{itemize}

Denote the conventional schedule cost function by
\begin{equation}
\label{eqn:scheduledelay}
\sigma_n(t) = \beta(n) [-t]^+ + \gamma(n) [t]^+ 
\,,
\end{equation}
and define the schedule cost gap function as
\begin{equation}
\label{eqn:scheduledelaygap}
\delta_{t_1, t_2}(n) =  \sigma_n(t_1) - \sigma_n(t_2)
\,.
\end{equation}

\begin{lemma}
\label{lem:Monge}
The schedule cost gap function $\delta_{t_1, t_2}(n)$ is strictly increasing in $n$, for any $t_1 < t_2, t_1,t_2\in[\tau_s,\tau_e]$.
\end{lemma}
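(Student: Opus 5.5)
The plan is a direct case analysis on the signs of $t_1$ and $t_2$, using only the definition~\eqn{scheduledelay} and the strict monotonicity of $\beta$ (increasing) and $\gamma$ (decreasing). Since $\sigma_n(t)$ is linear in the pair $(\beta(n),\gamma(n))$ for fixed $t$, the gap can be written as
\begin{equation*}
\delta_{t_1,t_2}(n) = \beta(n)\bigl([-t_1]^+ - [-t_2]^+\bigr) + \gamma(n)\bigl([t_1]^+ - [t_2]^+\bigr).
\end{equation*}
The proof then reduces to checking the signs of the two coefficients. Because $t_1<t_2$ and $x\mapsto[-x]^+$ is nonincreasing, the first coefficient is nonnegative. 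Because $x\mapsto[t]^+$ is nondecreasing, the second coefficient is nonpositive. Consequently $\beta(n)$ enters with a nonnegative weight and is increasing, while $\gamma(n)$ enters with a nonpositive weight and is decreasing. Each term is therefore nondecreasing in $n$.

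For strictness I split into three cases.
\begin{enumerate}[label=(\roman*)]
\item If $t_1<t_2\le 0$, then $\delta = \beta(n)(t_2-t_1)$, which is strictly increasing since $t_2-t_1>0$ and $\beta$ is strictly increasing.
\item If $0\le t_1<t_2$, then $\delta=\gamma(n)(t_1-t_2)$, a negative multiple of a strictly decreasing function, hence strictly increasing.
\item If $t_1<0<t_2$, then $\delta=-\beta(n)t_1-\gamma(n)t_2$. Both terms are strictly increasing, because $-t_1>0$ and $-t_2<0$.
\end{enumerate}
The boundary cases where $t_1=0$ or $t_2=0$ are absorbed into (i) or (ii). In every case at least one coefficient is strictly nonzero, because $t_1\ne t_2$, and this is exactly what gives strict monotonicity.

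I expect no real obstacle. The only care needed is to observe that the restriction $t_1,t_2\in[\tau_s,\tau_e]$ plays no role beyond ensuring $t_1<t_2$. The argument relies on the assumption that $\beta$ and $\gamma$ are \emph{strictly} monotone, and not merely monotone as in the inverse-CDF definition. The positivity of the supports is not even needed.
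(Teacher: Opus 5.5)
Your proof is correct and takes the paper's approach: a case analysis on the signs of $t_1,t_2$, using the strict monotonicity of $\beta$ and $\gamma$. The only difference is in how the cases are split. The paper uses four cases, with Cases III ($t_1<0\le t_2$) and IV ($t_1\le 0<t_2$) overlapping. You instead fold the boundary cases $t_1=0$ and $t_2=0$ into (i) and (ii) and keep only the open mixed case $t_1<0<t_2$, which is equally exhaustive.
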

\begin{proof}

Considering four possible cases:
\begin{description}
\item[Case I:] $t_1 < t_2 \leq 0$.
\begin{equation}
\label{eqn:scheduledelaygap1}
\delta_{t_1, t_2}(n) =  \sigma_n(t_1) - \sigma_n(t_2) = \beta(n) \cdot \left[ t_2 - t_1 \right]
\,.
\end{equation}
The difference $t_2 - t_1$ is positive and $\beta(n)$ is strictly increasing in $n$, hence  $\delta_{t_1, t_2}(n)$ is strictly increasing in $n$.
\item[Case II:] $0 \leq t_1 < t_2$.
\begin{equation}
\label{eqn:scheduledelaygap2}
\delta_{t_1, t_2}(n) =  \sigma_n(t_1) - \sigma_n(t_2) = \gamma(n) \cdot \left[ t_1 - t_2 \right]
\,.
\end{equation}
The difference $t_1 - t_2$ is negative and $\gamma(n)$ is strictly decreasing in $n$, hence  $\delta_{t_1, t_2}(n)$ is strictly increasing in $n$.
\item[Case III:] $t_1 < 0 \leq t_2$.
\begin{equation}
\label{eqn:scheduledelaygap3}
\delta_{t_1, t_2}(n) =  \sigma_n(t_1) - \sigma_n(t_2) = -\beta(n) t_1 - \gamma(n) t_2 
\,.
\end{equation}
 $\beta(n)$ is strictly increasing in $n$ and $- t_1$ is positive, hence  $- \beta(n) t_1$ is strictly increasing in $n$.
Similarly, $- t_2$ is non-positive and $\gamma(n)$ is strictly decreasing in $n$, hence  $- \gamma(n) t_2 $ is non-decreasing in $n$.
Therefore,  $\delta_{t_1, t_2}(n)$ which is the sum of the two terms is strictly increasing in $n$. 

\item[Case IV:] $t_1 \leq 0 < t_2$.
\begin{equation}
\label{eqn:scheduledelaygap3}
\delta_{t_1, t_2}(n) =  \sigma_n(t_1) - \sigma_n(t_2) = -\beta(n) t_1 - \gamma(n) t_2 
\,.
\end{equation}
 $\beta(n)$ is strictly increasing in $n$ and $- t_1$ is non-negative, hence  $- \beta(n) t_1$ is non-decreasing in $n$.
Similarly, $- t_2$ is negative and $\gamma(n)$ is strictly decreasing in $n$, hence  $- \gamma(n) t_2 $ is strictly increasing in $n$.
Therefore,  $\delta_{t_1, t_2}(n)$ which is the sum of the two terms is strictly increasing in $n$. 
\end{description}
\end{proof}

\begin{lemma}
\label{lem:monotone}
At equilibrium, $t(n)$ and $n(t)$ are strictly increasing functions.
\end{lemma}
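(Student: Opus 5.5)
We argue by contradiction using a standard exchange argument built on Lemma~\ref{lem:Monge}. Suppose $(Q,t(\cdot))$ is an equilibrium and there exist travelers $n_1 < n_2$ with arrival times $t_2 = t(n_1)$ and $t_1 = t(n_2)$ satisfying $t_1 < t_2$. The reversed pair $t_1 = t_2$ is excluded separately below. The equilibrium condition~\eqn{equilibrium} says each traveler weakly prefers her own arrival time to the other's. For traveler $n_2$ this gives $Q(t_1) + \sigma_{n_2}(t_1) \le Q(t_2) + \sigma_{n_2}(t_2)$. For traveler $n_1$ it gives $Q(t_2) + \sigma_{n_1}(t_2) \le Q(t_1) + \sigma_{n_1}(t_1)$. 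Adding the two inequalities cancels the queueing terms and leaves
\[
\delta_{t_1,t_2}(n_2) \le \delta_{t_1,t_2}(n_1).
\]
Lemma~\ref{lem:Monge} says $\delta_{t_1,t_2}$ is strictly increasing in $n$, which forces $\delta_{t_1,t_2}(n_2) > \delta_{t_1,t_2}(n_1)$, a contradiction. Hence $n_1<n_2$ implies $t(n_1)\le t(n_2)$, so $t(n)$ is nondecreasing.

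A technical point is that Lemma~\ref{lem:Monge} is stated only for $t_1,t_2\in[\tau_s,\tau_e]$. In the equilibrium setting we do not yet know the arrival window. The four-case proof of Lemma~\ref{lem:Monge}, however, uses only $t_1<t_2$ and the strict monotonicity of $\beta$ and $\gamma$. I would therefore note that the lemma holds verbatim for any $t_1<t_2$ in $\mathcal{T}$ and apply it in that form.

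To upgrade to strict monotonicity, I must rule out a positive mass of travelers sharing one arrival time. Here I would use the bottleneck physics rather than preferences. Arrivals at the exit are bounded by the capacity $s$, and $Q$ satisfies the FIFO constraint $Q(t_2)-Q(t_1)<t_2-t_1$. So the cumulative number of travelers arriving in any interval $[a,b]$ is at most $s(b-a)$, and no single instant can absorb a positive measure of travelers. Consequently $t(n)$ has no flat pieces: if $t(n)=\bar t$ on an interval $[n_1,n_2]$ with $n_1<n_2$, then the mass $n_2-n_1>0$ arrives at the single instant $\bar t$, which is impossible. A nondecreasing function with no flat segments is strictly increasing, and its inverse $n(t)$ on the arrival set is then also strictly increasing.

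The main obstacle is this last step. It requires making the capacity constraint precise in the $(Q,t(n))$ representation, presumably via the equivalence with departure-time representations in Appendix~\ref{sec:choicerepresentation}, where cumulative arrivals $\nu_A$ have slope at most $s$. If the appendix permits $t(n)$ to be defined only almost everywhere, I would state the result for a suitable representative and handle the measure-zero exceptions by the same exchange argument.
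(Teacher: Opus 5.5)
Your proposal is correct and is essentially the paper's argument: the paper chains the two equilibrium inequalities as $\delta_{t_1,t_2}(n(t_1)) \le Q(t_2)-Q(t_1) \le \delta_{t_1,t_2}(n(t_2))$, which is your summed inequality, and then invokes Lemma~\ref{lem:Monge}. Your remark that that lemma holds for all $t_1<t_2$, not only in $[\tau_s,\tau_e]$, is right and fixes a small imprecision in how the paper applies it. The one difference is that the paper gets strictness directly from $n(t_1)\neq n(t_2)$, since $n(\cdot)$ and $t(\cdot)$ are defined in Section~\ref{sec:model} as mutual inverses; your capacity-based step ruling out ties, and the obstacle you flag about making it precise, are therefore unnecessary, though the step is a sound safeguard if that modeling convention is dropped.
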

\begin{proof}
The generalized cost can be expressed as $C_{n}(t) = Q(t) + \sigma_n(t)$.
By the equilibrium condition,  we have 
$C_{n(t)}(t) \leq C_{n(t)}(\alttime)$ for any alternative arrival time $\alttime$, 
which can be written as 
$Q(t) + \sigma_{n(t)}(t) \leq  Q(\alttime) + \sigma_{n(t)}(\alttime) $ 
or 
$\delta_{t, \alttime}(n(t)) \leq  Q(\alttime) - Q(t)$ and 
$\delta_{\alttime, t}(n(t)) = - \delta_{t, \alttime}(n(t)) \geq  Q(t) - Q(\alttime)$.
Therefore, for any $t_1<t_2$, 
$$\delta_{t_1, t_2}(n(t_1)) \leq  Q(t_2) - Q(t_1) \leq \delta_{t_1, t_2}(n(t_2))\,,$$
implying that $\delta_{t_1, t_2}(n(t_1)) \leq \delta_{t_1, t_2}(n(t_2))$. Since $\delta_{t_1, t_2}(n)$ is strictly increasing in $n$, and $n(t_1) \neq n(t_2)$, we must have $n(t_1)<n(t_2)$.
Since $t(n)$ is the inverse of $n(t)$, we conclude that $t(n)$ is strictly increasing as well.
\end{proof}

Our next result shows that the support of $Q$ is an interval of the form $(\tau_s, \tau_e)$, with $\tau_s < 0 < \tau_e$ and $\tau_e = \tau_s + N/s$.

\begin{lemma}
At equilibrium, the support of $Q$ is a single interval of width $N/s$, containing the preferred arrival time 0.
\end{lemma}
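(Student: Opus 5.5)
We argue in three steps, exploiting the monotone ordering from Lemma~\ref{lem:monotone} together with the FIFO and capacity conditions on $Q$. Throughout, let $\mathcal{A}$ denote the set of equilibrium arrival times $\{t(n): n\in[0,N]\}$, with $\tau_s=\inf\mathcal{A}$ and $\tau_e=\sup\mathcal{A}$.

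\textbf{Step 1: no idle time inside the arrival window.} Suppose the set where $Q=0$ meets the interior $(\tau_s,\tau_e)$ along an interval, or that $\mathcal{A}$ has a gap $(a,b)$ with $\tau_s\le a<b\le\tau_e$.

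First consider a gap $(a,b)$ in arrivals. No vehicle exits during the gap, so FIFO forces $Q$ to decrease at unit rate there, $Q(b)=Q(a)-(b-a)$, unless the queue empties. Take the traveler $n(a)$, or a limit of travelers arriving just before $a$. Moving her arrival to a time $u\in(a,b)$ changes her cost by $Q(u)-Q(a)+\sigma_{n}(u)-\sigma_{n}(a)$. The queueing term decreases at rate $1$, or the queue is already $0$. The schedule term changes at rate at most $\max(\beta,\gamma)$ in the relevant direction, and $\beta<1$. This gives a profitable deviation, at least when $a<0$. When $a\ge0$, apply the symmetric argument to the traveler at $b$ moving earlier: $Q$ rises toward the left at unit rate while $\gamma$ penalizes being later, so arriving earlier saves lateness cost and faces queueing that is at most as large.

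I expect this step to be the main obstacle. To handle the gap case cleanly I would first show that $Q$ cannot be positive at a time with no arrivals; otherwise the argument becomes messy. The key point is that any positive queue at time $u$ means someone departed to arrive at $u$. Under the capacity/FIFO conventions, a positive queue implies arrivals at rate $s$, so $Q>0$ at $u$ forces $u\in\mathcal{A}$. The remaining case is $Q=0$ on a subinterval of the window with no arrivals. Here the traveler just after the gap, if late, can arrive earlier at zero queue and lower lateness. If that traveler is early, then the traveler just before the gap can arrive later at zero queue with lower earliness, provided $Q$ at her own time is $0$. If her queue is positive, she can still shift into the gap, since $Q$ decreases at unit rate while $\beta<1$. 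Hence $\mathcal{A}$ is an interval and $Q>0$ on its interior.

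\textbf{Step 2: width $N/s$.} Once arrivals form a single interval $[\tau_s,\tau_e]$ and the bottleneck discharges at capacity throughout (because the queue is positive on the interior), the $N$ travelers exit at rate $s$. This gives $\tau_e-\tau_s=N/s$. We also get $Q(\tau_s)=0$ and $Q(\tau_e)=0$: the first vehicle finds no queue, and a positive queue at $\tau_e$ would mean further arrivals.

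\textbf{Step 3: $0$ lies in the interval.} Suppose $\tau_e\le 0$. The last traveler arrives early with queue $Q(\tau_e)=0$. She can instead arrive at $\min(0,\tau_e+\epsilon)$ with zero queue and strictly smaller earliness cost, since $\beta>0$. This contradicts equilibrium, and the symmetric argument with $\gamma>0$ rules out $\tau_s\ge0$. Finally, the support of $Q$ coincides with $[\tau_s,\tau_e]$ up to endpoints, by Step 1 and since $Q=0$ outside $\mathcal{A}$. The result is a single interval of width $N/s$ containing $0$.
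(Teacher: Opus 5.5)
\textbf{There is a genuine gap.} Your Step~1 does not establish what Step~2 uses.

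The only configuration you actually rule out is a gap in $\mathcal{A}$: a stretch with no arrivals, where $Q=0$ by the convention that a positive queue forces discharge at rate $s$. The other alternative in your opening sentence is never treated. That alternative is a subinterval of $(\tau_s,\tau_e)$ on which $Q=0$ while travelers still arrive, i.e.\ uncongested flow at a rate below $s$. You nonetheless conclude ``$Q>0$ on its interior,'' and Step~2 depends on exactly that. With such a stretch, the arrival window would be longer than $N/s$.

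The missing idea is the direct deviation argument the paper uses. It does not care whether anyone arrives at the zero-queue time. Suppose $t^*\in(\tau_s,\tau_e)$ has $Q(t^*)=0$. If $t^*\le 0$, any traveler $n$ arriving at $t_1<t^*$ pays $C_n(t_1)\ge -\beta(n)t_1>-\beta(n)t^*=C_n(t^*)$, since $\beta(n)>0$. If $t^*\ge 0$, any traveler arriving at $t_2>t^*$ pays $C_n(t_2)\ge\gamma(n)t_2>\gamma(n)t^*=C_n(t^*)$, since $\gamma(n)>0$. Such travelers exist because $t^*$ lies strictly between $\inf\mathcal{A}$ and $\sup\mathcal{A}$.

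This gives $Q>0$ on all of $(\tau_s,\tau_e)$ in one stroke and subsumes your gap case entirely, since gaps have $Q=0$. After that, your Step~2 accounting is correct. Defining the window by arrivals means no traveler lies outside it, which is a tidy alternative to the paper's separate step of pulling outside travelers to $\tau_s$ or $\tau_e$.

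\textbf{Smaller points.} The first paragraph of Step~1 should be discarded, for three reasons:
\begin{itemize}
\item Under the convention you adopt later, no exits means an empty queue, so $Q$ cannot fall at unit rate across a gap.
\item ``$Q$ rises toward the left'' contradicts ``queueing that is at most as large.''
\item A rate comparison against $\max(\beta,\gamma)$ fails for late gaps, since $\gamma$ may exceed $1$ (in the paper's example $\gamma\in[1.2,3]$).
\end{itemize}
Likewise, in the gap case the caveat ``provided $Q$ at her own time is $0$'' is unnecessary. The gap has zero queue and lower earliness, so the move is strictly profitable whatever her current queue.

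In Step~3, if $\tau_e=0$ then $\min(0,\tau_e+\epsilon)=\tau_e$ and the last traveler gains nothing, so you only exclude $\tau_e<0$. To get $\tau_s<0<\tau_e$, as the paper intends, use any traveler arriving strictly before $0$ instead. She can move to $0$, where $Q(0)=0$, and strictly lower her cost. The case $\tau_s=0$ is symmetric.
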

\begin{proof}
We first show that the support of $Q$ is a single interval.
Assume not.
Then there are times $t^*$, $t_1$, and $t_2$ such that $t_1 < t^* < t_2$ and $Q(t_1) > 0$, $Q(t^*) = 0$, $Q(t_2) > 0$.
If $t^* \leq 0$, consider the traveler $n(t_1)$ arriving at $t_1$, experiencing a cost $Q(t_1) + \beta[-t_1]^+$.
As $Q(t^*) < Q(t_1)$ and $t_1 < t^*$, we have $C_{n(t_1)}(t^*) < C_{n(t_1)}(t_1)$, violating the equilibrium condition~\eqn{equilibrium}.
If $t^* \geq 0$, we repeat the same argument for the traveler $n(t_2)$ arriving at $t_2$.

The support of $Q$ is therefore a single interval $(\tau_s, \tau_e)$.
It must contain the preferred arrival time; 
if $\tau_s \geq 0$, then any traveler arriving at  $t \in (\tau_s, \tau_e)$ can strictly reduce their cost by arriving at time zero instead, and similarly if $\tau_e \leq 0$.   
We now show that this interval has width $N/s$, that is, that the queue discharges at capacity during the entire interval in which travelers are departing.
As there are $N$ travelers, the interval cannot have width greater than $N/s$, so the only other option is that it is smaller.
If this is the case, a total of $s(\tau_e - \tau_s) < N$ travelers arrive during the interval with a queue, so there must be an interval of travelers arriving either before $\tau_s$ or after $\tau_e$.
In the first case, consider a traveler arriving prior to $\tau_s$.
This traveler can strictly reduce their cost by departing at $\tau_s$, since $Q(\tau_s) = 0$.
The second case is analogous.
\end{proof}

We now show that an equilibrium queueing profile must have a specific form.
\begin{lemma}
At equilibrium, $Q(t)$ must be given by equation~\eqn{queueintegral} for some value of $\tau_s$.
\end{lemma}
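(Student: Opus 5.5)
The plan is to combine the structural facts already established. By Lemma~\ref{lem:monotone}, $n(t)$ is strictly increasing. By the preceding lemma, the support of $Q$ is a single interval $(\tau_s,\tau_e)$ of width $N/s$ containing $0$, and the bottleneck discharges at capacity throughout it. These two facts together pin down the arrival order exactly: travelers arrive in order of $n$ at rate $s$ over $[\tau_s,\tau_e]$, so $n(t)=N_1+ts$ with $N_1=-s\tau_s$, which is \eqn{nt}. What remains is to show that $Q$ satisfies \eqn{firstorder_Q} almost everywhere and is absolutely continuous. Then integrating from $Q(\tau_s)=0$ gives \eqn{queueintegral}.

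The key step is a two-sided difference bound. It comes from the inequality chain in the proof of Lemma~\ref{lem:monotone}: for any $t_1<t_2$ in $[\tau_s,\tau_e]$,
\[
\delta_{t_1,t_2}(n(t_1)) \le Q(t_2)-Q(t_1) \le \delta_{t_1,t_2}(n(t_2)).
\]
I will treat the two sides of $0$ separately.

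\textbf{Early side.} If $t_1<t_2\le 0$, Case~I of Lemma~\ref{lem:Monge} gives
\[
\beta(n(t_1))(t_2-t_1)\le Q(t_2)-Q(t_1)\le \beta(n(t_2))(t_2-t_1).
\]
Since $\beta<1$ and $\beta>0$, this immediately shows that $Q$ is Lipschitz on $[\tau_s,0]$. In particular $Q$ has no negative jumps there, so it is absolutely continuous. Dividing by $t_2-t_1$ and using continuity of $\beta\circ n$ (continuity of $\beta$ and linearity of $n(t)$), the difference quotients are squeezed. Hence $Q$ is in fact differentiable at every $t\in(\tau_s,0)$ with $Q'(t)=\beta(n(t))$.

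\textbf{Late side.} For $0\le t_1<t_2$, Case~II gives
\[
-\gamma(n(t_1))(t_2-t_1)\le Q(t_2)-Q(t_1)\le -\gamma(n(t_2))(t_2-t_1),
\]
so $Q'(t)=-\gamma(n(t))$ on $(0,\tau_e)$, again with Lipschitz continuity. Continuity at $t=0$ follows from applying Case~III/IV with $t_1<0<t_2$ and letting both tend to $0$: both bounds go to zero, so $Q(0^-)=Q(0)=Q(0^+)$.

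Integrating these derivatives from $Q(\tau_s)=0$ yields exactly \eqn{queueintegral}. Here I use $Q(\tau_s)=0$, which holds by continuity at the endpoint of the support. The upper bound at $t_1=\tau_s$ gives $Q(t_2)\le\beta(n(t_2))(t_2-\tau_s)\to0$.

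\textbf{Main obstacle.} The main obstacle is regularity. A priori, $Q$ is only known to satisfy the FIFO constraint and could have negative jumps or fail to be differentiable, so I cannot simply invoke \eqn{firstorder_Q}. The two-sided Monge-type sandwich above is what resolves this: it simultaneously rules out jumps and forces differentiability. I expect the care needed there to lie in justifying $n(t)=N_1+st$, that is, that the monotone arrival order at full capacity leaves no gaps. That is exactly what the previous lemma together with Lemma~\ref{lem:monotone} supplies.
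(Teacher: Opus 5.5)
Your proof is correct, and it reaches the conclusion by a more direct route than the paper.

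The paper introduces the gap function $G(t)$, the discrepancy between $Q$ and the integral form \eqn{queueintegral}, and argues by contradiction. If $G$ is not identically zero, some purely early (or purely late) interval has average slope of $G$ exceeding some $\epsilon>0$. Subdividing it, using uniform continuity of $\beta\circ n$, until $\beta(n(t))$ varies by less than $\epsilon/2$ gives a short interval $[t_3,t_4]$ on which traveler $n(t_4)$ strictly prefers arriving at $t_3$.

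You instead reuse the two-sided sandwich $\delta_{t_1,t_2}(n(t_1))\le Q(t_2)-Q(t_1)\le\delta_{t_1,t_2}(n(t_2))$ from the proof of Lemma~\ref{lem:monotone}, evaluate $\delta$ with Cases I and II of Lemma~\ref{lem:Monge}, and squeeze the difference quotients. This needs only pointwise continuity of $\beta\circ n$ and $\gamma\circ n$. The underlying equilibrium comparisons are the same: a traveler at one end of a short interval deviating to the other end.

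Your version yields more along the way. It shows before any integration that an equilibrium $Q$ is Lipschitz on $[\tau_s,0]$ and on $[0,\tau_e]$, so the negative jumps permitted by FIFO cannot occur. It also shows that $Q$ is differentiable at every interior point, with derivative $\beta(n(t))$ or $-\gamma(n(t))$. The paper obtains this regularity only a posteriori, once $G$ is known to vanish. In exchange, its argument works purely with increments and never needs derivatives of a $Q$ that is a priori only known to satisfy FIFO. You also make explicit the step $n(t)=N_1+st$ from the two preceding lemmas, which the paper uses tacitly.

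Two minor remarks. Continuity at $0$ is already contained in your Lipschitz bounds on the closed intervals, so Cases III/IV are unnecessary. And $Q(\tau_s)=0$ should simply be taken from the support statement of the previous lemma, as the paper does when it uses $G(\tau_s)=0$. Your bound with $t_1=\tau_s$ presupposes that value and only confirms right-continuity there.
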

\begin{proof}
Given $\tau_s$ and any queue evolution profile $Q(t)$ with support $(\tau_s, \tau_e)$, define the gap function
\begin{equation}
\label{eqn:queuegap}
G(t) =
\begin{cases}
\displaystyle
Q(t) -  \int_{\tau_s}^{t} \beta(n(\alttime))~d\alttime & t \in [\tau_s, 0] \\
\displaystyle
Q(t) - Q(0) + \int_{0}^{t} \gamma(n(\alttime))~d\alttime  & t \in (0, \tau_e] 
\end{cases}
\,.
\end{equation}

We show that in any equilibrium solution the gap function $G(t)$ is uniformly zero.
If not, then $G(t) \neq 0$ for some $t$.
In this case, there must exist an interval $[t_1,t_2]$ with non-zero average slope (since $G(\tau_s)=0$).
Without loss of generality, assume that the slope is positive, that is, there is some $\epsilon > 0$ such that $G(t_2)-G(t_1) > \epsilon (t_2-t_1)$.
The interval can be chosen to be completely early or completely late.
Assume it is early (so $t_1 < t_2 \leq 0$); the proof is analogous for a late interval.

The chosen interval can be divided into a finite number of sub-intervals, so that the range of $\beta(n(t))$ within each sub-interval is at most $\epsilon/2$.
The average slope of $G(t)$ must be greater than $\epsilon$ in at least one of these sub-intervals.
Let $[t_3,t_4]$ be one such interval of length $\delta = t_4-t_3$, satisfying $G(t_4)-G(t_3) > \epsilon \delta$ and $\beta(n(t_4))-\beta(n(t_3)) < \epsilon/2$.
As a result,
\begin{align}
\label{eqn:queuegapbound}
\displaystyle
0 < \beta(n(t_4))\delta - \int_{t_3}^{t_4} \beta(n(\alttime))~d\alttime < \frac{\epsilon \delta }{2}
\,,
\end{align}
and therefore
\begin{align}
C_{n(t_4)}(t_4) - C_{n(t_4)}(t_3) &= Q(t_4) - Q(t_3) - \beta(n(t_4))  \delta  \\
&> Q(t_4) - Q(t_3) - \int_{t_3}^{t_4} \beta(n(\alttime))~d\alttime - \frac{\epsilon \delta}{ 2} \\
&= G(t_4) - G(t_3) - \frac{\epsilon \delta }{ 2}
> \frac{\epsilon \delta }{ 2} >0 
\,,
\end{align}
which contradicts the equilibrium condition.
\end{proof}

\begin{theorem}
\label{thm:uniqueness}
There is exactly one equilibrium queueing profile $Q(t)$ and traveler order $t(n)$.
\end{theorem}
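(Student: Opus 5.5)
The plan is to combine the lemmas just established into a chain that pins down every element of an arbitrary equilibrium, and to use Theorem~\ref{thm:existence} for existence. Fix any equilibrium pair $(Q,t(\cdot))$.

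\emph{Step 1: support of $Q$.} By the support lemma, the support of $Q$ is a single interval $(\tau_s,\tau_e)$ containing $0$, with $\tau_e=\tau_s+N/s$. The bottleneck discharges at capacity throughout, so all $N$ travelers arrive in $[\tau_s,\tau_e]$ at rate $s$.

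\emph{Step 2: arrival order.} By Lemma~\ref{lem:monotone}, $n(t)$ is strictly increasing. Arrivals fill $[\tau_s,\tau_e]$ at rate $s$ starting from traveler $0$, so $n(t)=s(t-\tau_s)=N_1+ts$ with $N_1=-s\tau_s$. This is exactly \eqn{nt}, and equivalently \eqn{tn}.

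\emph{Step 3: queue profile and start time.} The preceding lemma shows that the gap function $G$ vanishes identically. Hence $Q$ is given by \eqn{queueintegral} with this $n(t)$ and this $\tau_s$. Physical feasibility requires $Q(\tau_e)=0$, that is, $\Delta(N_1)=0$. Lemma~\ref{lem:uniquestarttime} shows that $\Delta$ is continuous and strictly increasing, so $N_1$, and therefore $\tau_s$, is uniquely determined.

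\emph{Step 4: conclusion.} Steps 1--3 show that any equilibrium coincides with the profile constructed before Theorem~\ref{thm:existence}, so there is at most one. Theorem~\ref{thm:existence} shows that this profile is indeed an equilibrium, so there is exactly one.

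The main obstacle is Step 2. Lemma~\ref{lem:monotone} gives only monotonicity of $n(t)$; we must also know that the arrival rate is exactly $s$ with no gaps or atoms in arrivals inside $[\tau_s,\tau_e]$. This uses the discharge-at-capacity conclusion of the support lemma: $N$ travelers over an interval of length $N/s$ at maximum rate $s$ forces the cumulative arrival count to be exactly $s(t-\tau_s)$. I would make that counting step explicit. I would also note that the lemma on the form of $Q$ was stated with $n(t)$ as a general increasing map, and it specializes to \eqn{queueintegral} once $n(t)$ is identified.
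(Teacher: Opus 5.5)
Your proposal is correct and follows essentially the same route as the paper. The paper chains the preceding lemmas to force the forms \eqn{tn} and \eqn{queueintegral}, then uses Lemma~\ref{lem:uniquestarttime} to pin down $\tau_s$ and Theorem~\ref{thm:existence} to conclude; your explicit counting step (discharge at capacity over a window of width $N/s$, combined with monotonicity, giving $n(t)=N_1+ts$) makes precise a step the paper leaves implicit.
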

\begin{proof}
The previous lemmas show that $t(n)$ and $Q(t)$ must take the forms given by equations~\eqn{tn} and~\eqn{queueintegral}, for some value of $\tau_s$.
By Lemma~\ref{lem:uniquestarttime}, there is a single value of $\tau_s$ which is physically feasible.
Therefore, the equilibrium used to demonstrate existence in Theorem~\ref{thm:existence} is the only one.
\end{proof}

\section{Equilibrium instability: discrete dynamics}
\label{Sec:InstabilityDiscrete}

The previous section established the existence of a unique equilibrium solution to our model, given any problem instance ($F_\beta$, $F_\gamma$, $N$, $s$).
We now show that this equilibrium solution is unstable for a wide family of ``reasonable'' day-to-day dynamic processes, so that the delay process does not converge to equilibrium even when starting arbitrarily close to an equilibrium solution.
Informally, by ``reasonable'' we mean that the evolution in queue profile reflects myopic and selfish behavior, responding to local ``pressure'' based on the derivative of the cost function at their time of arrival.
The subsections below formalize this approach.
Subsection~\ref{sec:individualresponse} provides motivation by considering how an individual traveler might respond in a non-equilibrium setting.
Subsection~\ref{sec:dynamicsclasses} proposes several classes of dynamics based on this principle, and in particular suggests that order-preserving, local, pressure-related (OLP) dynamics (one of these classes) are particularly worthy of investigation.
Subsection~\ref{sec:olpinstability} provides formal definitions of OLP dynamics and stability, and takes up the question of whether any such dynamic process is stable.

\subsection{Traveler response to non-equilibrium solutions}
\label{sec:individualresponse}

We begin by considering how a traveler might respond after experiencing a non-equilibrium travel day in our model, when making a choice of departure time on the subsequent day.
In this subsection, we consider only the response of an individual traveler; the following subsections will discuss dynamic processes in which many travelers simultaneously respond.
We assume that the traveler is selfish, aiming to reduce their generalized cost without concern for that of any other traveler.
We also assume that the traveler is myopic and does not anticipate the impact of their response on the costs they experience on the next day.
If this traveler is the only one changing their choice (as considered in this subsection), they have an infinitesimal impact on the queueing and delay profiles and can treat them as constant; the following subsection discusses this assumption more when all travelers are responding.

\begin{figure*}[t!]
    \centering
        \includegraphics[scale=0.95]{ 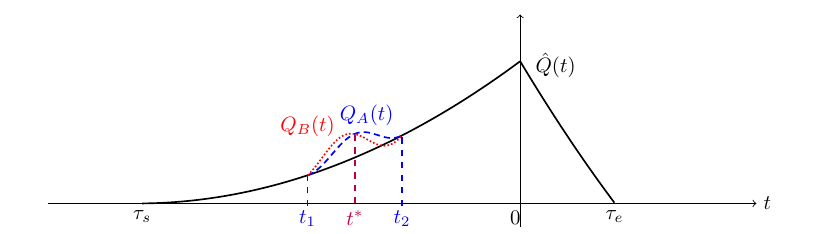}
    \caption{The solid black line is the equilibrium profile $\hat{Q}(t)$ from Figure~\ref{fig:dep_queue}.  The blue and red lines are perturbed queueing profiles.    }
    \label{fig:stability}
\end{figure*}
\begin{figure}
  \centering
    \includegraphics[scale=0.6]{ 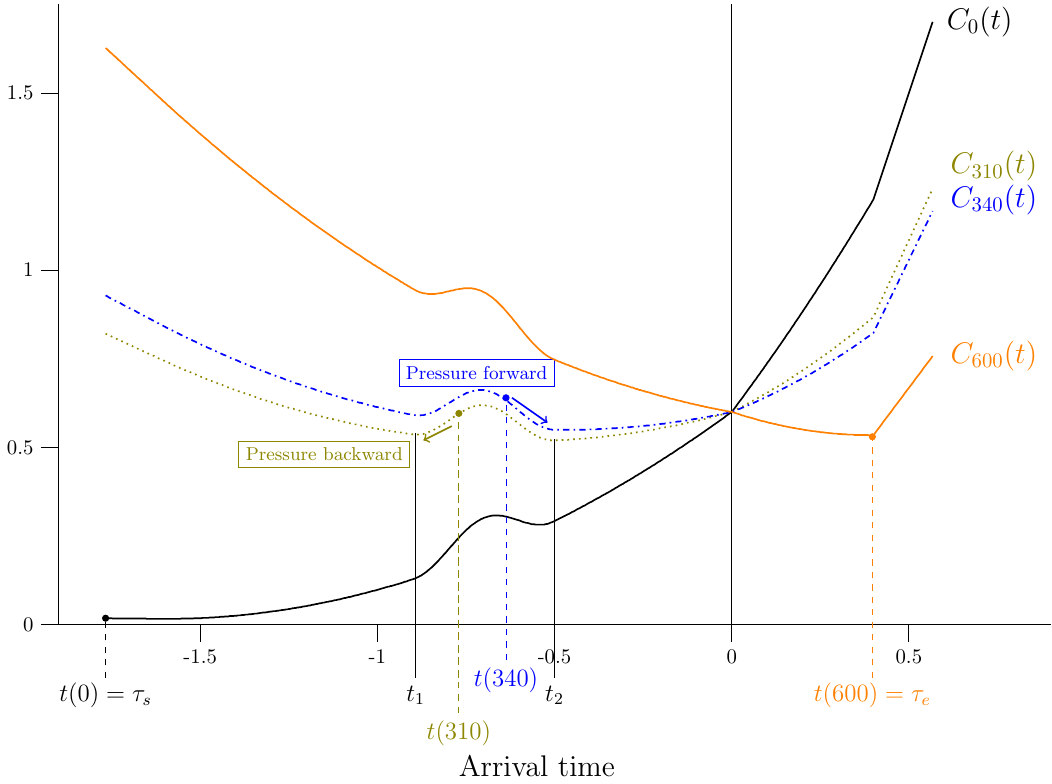}
    \caption{The cost functions $C_n(t)$ after a small perturbation away from equilibrium. Travelers within the perturbation interval are pressured to move in the direction of a new local minimum.}\label{fig:cost_per_rev}
\end{figure}

As an example, consider the scenario in Figure~\ref{fig:stability}.
In this figure, the solid black line is the equilibrium queueing profile $\hat{Q}(t)$ assuming linear heterogeneity in delay parameters (the example outlined at the end of Section~\ref{sec:model}), shown both in terms of a traveler's arrival time and a traveler's departure time.
Under the equilibrium profile, each traveler arrives at a time of minimum cost and has no pressure to depart earlier or later.
The dashed blue line $Q_A(t)$ and dotted red line $Q_B(t)$ are perturbed queueing profiles which differ 
from the equilibrium solution in a subinterval of the early arrivals.
We perturb only the queueing profile in this example, and assume that travelers remain ordered by increasing $\beta$ (and decreasing $\gamma$), as at equilibrium, for reasons discussed in the next subsection.

Within this interval, travelers are no longer experiencing minimum costs, and can reduce their cost by arriving either earlier or later (we call this ``pressure''); outside of the perturbation interval, travelers continue to experience minimum costs and face no pressure to move.
Figure~\ref{fig:cost_per_rev} illustrates this point for the perturbation $Q_A$.
This figure plots the cost profiles of different travelers, with a solid dot indicating the time at which they arrive.
The cost profile for traveler $n$, under any queueing profile $Q$, is expressed as
\begin{equation}
\label{eqn:perturbcost_rev}
C_n(t \mid Q) = Q(t) + \beta(n) [-t]^+ + \gamma(n) [t]^+
\,.
\end{equation}

The black line $C_0$ shows the cost 
traveler 0 (with minimum $\beta$ and maximum $\gamma$) would have experienced if they had arrived at any time between $\tau_s$ and $\tau_e$; the minimum is at $\tau_s$, their arrival time, so there is no pressure to arrive earlier or later.
Even though their arrival cost has been perturbed between $t_1$ and $t_2$, the minimum point is unchanged.
The gold and blue curves show cost profiles for travelers 310 and 340.
These travelers arrive during the perturbation window, and their arrival times are no longer times of minimum cost.
They therefore experience pressure to arrive at a different time.
The labels in the figure indicate a ``local'' pressure direction.
The distinction between local and global pressure is discussed more in the following subsection; what is important is that the traveler prefers to arrive at a different time than the one in which they actually do, and will adjust their departure time in a way that, given the current queueing profile $Q$, would lead to arrival time with lower total cost.

\subsection{Classes of dynamic processes}
\label{sec:dynamicsclasses}

The previous subsection used a small example to motivate the idea of ``pressure,'' that in a non-equilibrium scenario some travelers see opportunities to reduce their cost by adjusting their departure time.
We now discuss several classes of dynamic processes that suggest how to represent the evolution of the system when many drivers are simultaneously making such adjustments.
Our descriptions in this subsection remain mostly intuitive, and serve to motivate the formal definition and analysis of a specific class of dynamic process (order-preserving local pressure-related dynamics) in the following subsection.
Given a traveler departure order and a queueing profile, a dynamic process specifies each traveler's departure time on the subsequent day.
As discussed in Appendix~\ref{sec:choicerepresentation}, knowing each traveler's departure time is equivalent to knowing the departure order and queueing profile, and vice versa; so these dynamic processes can equivalently be understood as mapping one day's departure times to the next day's, or from one day's departure order and queueing profile to the next day's departure order and queueing profile, and so forth.
For purposes of exposition, it is helpful to motivate these dynamic processes in terms of departure time choice (where choice behavior is most clearly expressed), but for purposes of analysis the equivalent representation in terms of departure order and queueing profile will be more convenient.

We retain the assumptions that drivers are selfish and myopic, making their choices based on the current queueing profile $Q$ and not any prediction of how $Q$ might evolve based on others' choices.
That is, we assume that drivers choose a departure time, assuming that their arrival time will be based on the queueing profile $Q$ they just experienced.
We believe this assumption is justified, because alternatives would require travelers to have a significant amount of hard-to-observe information (the bottleneck capacity, the number of other players, the distribution of their schedule delay parameters, and knowledge of their strategic behavior).
By contrast, the myopic assumption requires only that drivers are aware of their own personal arrival time costs --- they need only know the queueing profile, and nothing more.
The myopic assumption also allows us to easily pass back and forth between departure times and arrival times (drivers act as if they will continue to differ exactly by $Q$).
(It is helpful at times to refer to both perspectives; the fundamental travel choice is departure time, but schedule delay is based on arrival time.)
Finally, the myopic assumption is consistent with dynamics commonly assumed in day-to-day traffic assignment models \citep[see][for analysis of several common update dynamics]{gou18}, for instance, in the Braess paradox, drivers switch routes intending to reduce their individual cost, even as they fail to anticipate that others switching routes as well will ultimately increase their cost.

We say a dynamic process is \emph{pressure-related} if \emph{each traveler chooses a departure time in such a way that, assuming $Q$ were to remain the same, their cost on the next day would be no greater than that on the current day.}   That is, if $\hat{\tau}$ and $\tau$ are the departure times of traveler $n$ on the current and next days, we require $C_n(t \mid Q) \leq C_n(\hat{t} \mid Q)$, where $t = \tau + Q(t)$ and $\hat{t} = \hat{\tau} + Q(\hat{t}$).
This class of dynamic processes is very broad, including best-response dynamics, and any other which moves each traveler to a departure time no worse than their current one.
Most of the dynamics discussed in Section~\ref{sec:litreview} are pressure-related.

We say that a pressure-related dynamic process is \emph{local pressure-related} if the direction of change in departure time corresponds to the derivative in the cost profile at the time of arrival in $Q$.
Refer again to Figure~\ref{fig:cost_per_rev}, and consider travelers 310 (gold profile) and 340 (blue profile).
When traveler 310 arrives at the destination, their cost profile is increasing ($C'_{310} > 0$).
They therefore experience local pressure \emph{backward}: if they were to arrive at a slightly earlier time, their cost would decrease; and to do so, they must depart at an earlier time.
On the other hand, traveler 340 experiences local pressure \emph{forward}: $C_{340}$ is decreasing at their time of arrival ($C'_{340} < 0$), so if they were to arrive at a slightly later time (requiring slightly later departure), their cost would decrease.

We again emphasize that these assessments are made myopically, assuming $Q$ remains unchanged and without considering the strategic behavior of the other travelers in the system, and it is quite possible that decisions made with the intent of arriving earlier or later may not actually have the intended effect.

There are two reasons why local pressure-related dynamics are interesting to study.
Behaviorally, they require drivers to have a minimum of information: all that is needed is to know how the queue is evolving at the time of arrival (``is congestion getting better or worse when I arrive?''), and nothing about queue behavior at other time points.
We also expect local pressure-related dynamics to have more favorable stability properties than general pressure-related dynamics, because travelers are only accounting for cost changes in the neighborhood of their time of arrival, rather than over the entire departure window.
With local dynamics, only travelers arriving during the perturbation window face pressure to change their departure time.
By contrast, in general pressure-related dynamics, a perturbation during a small time window may induce many travelers throughout the entire departure window to adjust their choice.

We finally say that a local pressure-related dynamic process is \emph{ordered local pressure-related} (OLP) if the departure order $n(t)$ remains unchanged from one day to the next.
We expect OLP dynamic processes to have even more favorable stability properties, because we know that the equilibrium solution in our model exhibits an ordering property (Lemma~\ref{lem:monotone}).
If all dynamic processes which preserve this necessary ordering property can be shown to be unstable, then we would expect a broader class of dynamic processes which cannot enforce it to be unstable as well.
Behaviorally, the ordering assumption ensures that the departure time changes for travelers with similar departure times today are not so different that their departure order would switch tomorrow.

In local pressure-related dynamics, a traveler $n$ arriving at a time when $C'_n > 0$ faces ``backward pressure'' to depart earlier, and one arriving when $C'_n < 0$ faces ``forward pressure'' to depart later.
In OLP dynamics, we additionally have the property that all travelers will continue to \emph{arrive} at the same time, because the departure order is unchanged, and because the first and last travelers face no queue (and therefore no pressure to change their departure time.).
Therefore, backward pressure corresponds to an increase in the queueing delay experienced (the traveler departed earlier, but arrived at the same time), whereas forward pressure corresponds to a decrease (the traveler departed later, but arrived at the same time).

Consider again travelers 310 and 340 in Figure~\ref{fig:cost_per_rev}. 
Traveler 310 experiences ``backward pressure,'' as their cost profile is increasing at their moment of arrival, and traveler 340 experiences ``forward pressure.''
On the next day, traveler 310 will depart earlier, and traveler 340 will depart later, intending to reduce their cost.
However, other travelers are simultaneously changing their departure choices, and in an OLP dynamic these changes are such that the traveler order (and thus the arrival time) are the same.
The effect is for traveler 310 to experience a longer queue on the next day, and traveler 340 to experience a shorter queue.
Referring to Figure~\ref{fig:stability}, the perturbed queueing profile $Q_B$ is obtained by applying OLP dynamics to the perturbation $Q_A$; travelers departing before the ``peak'' of the perturbation (such as traveler 310) experience a higher queueing delay, and those departing after the peak (such as traveler 340) experience a lower delay.
Conversely, $Q_A$ cannot be obtained from $Q_B$ by applying OLP dynamics; to do so would require travelers to move contrary to the direction given by the local pressure.

To summarize, OLP dynamics have the property that, at each arrival time $t$, the change in the queue length experienced by a traveler arriving at $t$ has the same sign as the derivative of their cost profile at $t$.

\subsection{Instability of ordered local pressure-related dynamics}
\label{sec:olpinstability}

This subsection formally defines ordered local pressure-related dynamics and stability, and analyzes the stability properties of the former.

An ordered dynamic process is a mapping $Y: \mc{Q} \rightarrow \mc{Q}$, where $\mc{Q}$ is the space of admissible queueing profiles $Q(t)$, to be specified below.
The space of all ordered dynamic processes $\mc{Y}$ is very large and contains ``trivial'' dynamics, such as one that maps every queueing profile directly to equilibrium.
To make interesting statements about stability, we will restrict our attention to subsets of $\mc{Y}$.
\begin{definition}
An ordered dynamic process $Y$ is \emph{ordered local pressure-related (OLP)} if, for all $Q_A \in \mc{Q}$, for $Q_B = Y(Q_A)$, and for all $t$ where $Q_A$ is differentiable, we have
\begin{equation}
\label{eqn:pressure}
\left[ Q_B(t) - Q_A(t) \right] \cdot C'_{n(t)} (t \mid Q_A) \geq 0 
\,,
\end{equation}
where $n(t)$ is the index of the traveler arriving at time $t$ (given by the assumed order).
\end{definition}

We now introduce regularity conditions on the class of admissible delay profiles $\mc{Q}$, which will include the equilibrium and any admissible perturbations.
Let $\tau_s$ and $\tau_e$ be the unique start and end times  corresponding to the equilibrium for the given problem instance, and $\mc{T} = [\tau_s, \tau_e]$ the equilibrium arrival interval.
\begin{definition}
\label{def:mcQ}
Let $\mc{Q}$ denote a class of queueing delay profiles such that any $Q \in \mathcal{Q}$ satisfies:
\begin{enumerate}
\item The support of $Q$ is $\interior \mc{T}$, and $Q$ is strictly positive on its support.
\item $Q$ is continuous and piecewise continuously differentiable on $\mc{T}$. 
\item For any $t_1 < t_2$, we have $Q(t_2) - Q(t_1) < t_2 - t_1$.
\end{enumerate}
\end{definition}
The first condition requires that the departure window coincide with that at equilibrium.
OLP dynamics preserve this property; the first and last travelers experience no queueing delay and thus face no pressure to move.
The third condition ensures the queue profile is physically plausible and represents first-in, first-out behavior (ensuring $Q'(t) < 1$ where $Q$ is differentiable).
Appendix~\ref{sec:olpexistence} gives an example of a pressure-related dynamics for this class of queueing profiles.
Our analysis does not require any specific form of the dynamics beyond the pressure condition~\eqn{pressure}; for instance, we can allow travelers at different departure times to respond in stronger or weaker ways to the pressure.

A basic condition for such a process to be stable is that the equilibrium $\hat{Q}$ be a fixed point of $Y$, that is, $Y(\hat{Q}) = \hat{Q}$.
To assess stability, we use the $L_\infty$ norm to measure the distance between any two delay profiles $Q_1, Q_2 \in \mc{Q}$:
\begin{equation}
\label{eqn:distance}
d(Q_1, Q_2) = \max_{t \in \mc{T}} |Q_1(t) - Q_2(t)| \\
\,.
\end{equation}

With these definitions in mind, we present the stability notion we will use in the remainder of this section:
\begin{definition}
A mapping $Y : \mc{Q} \to \mc{Q}$ with a fixed point  $\hat{Q}$ (i.e., $Y(\hat{Q})=\hat{Q}$) is said to be \emph{converging} if there exists $\epsilon > 0$  such that, for any $Q$ satisfying $d(Q, \hat{Q}) < \epsilon$, the sequence $\{ Y^k(Q) \}$, obtained by iteratively applying $Y$ to $Q$ for $k$ times, converges to $\hat{Q}$.
\end{definition}\begin{definition}
An equilibrium $\hat{Q}$ is stable with respect to a class $\mc{K}$ of ordered dynamics ($\mc{K} \subseteq \mc{Y}$), all sharing a fixed point at $\hat{Q}$, if the class $\mc{K}$ contains a converging mapping $Y \in \mc{K}$. Conversely, the equilibrium is unstable with respect to the class $\mc{K}$ if the class does not contain a converging mapping.
\end{definition}
Note that our definition of stability is rather general: the existence of \emph{any} dynamics in the class $\mc{K}$ is sufficient, including a mapping chosen specifically for the given perturbation.

Our main result is that \emph{no} OLP dynamic will converge to the equilibrium, even from arbitrarily small perturbations.
To do so, we first discuss the existence of small perturbations.
Next, we evaluate the impact of OLP dynamics on the distance from equilibrium.
Finally, we conclude with the main instability theorem.

\begin{lemma}
\label{lemma:PerturbationExistence}
For any $\epsilon > 0$ there exists a queueing profile $Q \in \mc{Q}$ such that $0 < d(Q, \hat{Q}) \le \epsilon$.
\end{lemma}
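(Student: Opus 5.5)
The plan is to construct an explicit perturbation $Q = \hat{Q} + \eta\,\phi$ of the equilibrium profile. Here $\phi$ is a fixed bump function supported in a small closed subinterval $[t_1,t_2]$ of the early-arrival region $(\tau_s,0)$, and $\eta>0$ is a scaling parameter chosen small in terms of $\epsilon$. Concretely, I would take the tent function
\[
\phi(t) = \left[\, h - |t - m| \,\right]^+ ,
\]
where $m = (t_1+t_2)/2$ and $h = (t_2-t_1)/2$. This $\phi$ is continuous, piecewise linear (hence piecewise continuously differentiable), nonnegative, vanishes outside $[t_1,t_2]$, and satisfies $|\phi'| \le 1$ wherever it is differentiable. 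Since $\max \phi = h > 0$, we get $d(Q,\hat{Q}) = \eta h$. Choosing $\eta \le \epsilon/h$ gives $0 < d(Q,\hat{Q}) \le \epsilon$.

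It remains to verify the three conditions of Definition~\ref{def:mcQ}.

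\emph{Support and positivity.} Because $\phi \ge 0$ and $\phi = 0$ outside $[t_1,t_2] \subset \interior\mc{T}$, the profile $Q$ coincides with $\hat{Q}$ near $\tau_s$ and $\tau_e$ and outside the bump. Inside the bump, $Q \ge \hat{Q} > 0$. Hence $Q$ is strictly positive exactly on $\interior \mc{T}$, since $\hat{Q}$ has this property by the construction in~\eqn{queueintegral}, Lemma~\ref{lem:uniquestarttime} and Theorem~\ref{thm:uniqueness}.

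\emph{Regularity.} Continuity and piecewise continuous differentiability are inherited from $\hat{Q}$ and $\phi$. By~\eqn{queueintegral}, $\hat{Q}' = \beta(n(t))$ on $(\tau_s,0)$ and $-\gamma(n(t))$ on $(0,\tau_e)$, and both are continuous.

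\emph{FIFO condition.} This is the main obstacle. We need $Q(t_b)-Q(t_a) < t_b-t_a$ for all $t_a<t_b$, and a positive bump could in principle violate it. I would handle it through derivatives. On $(\tau_s,0)$ we have $\hat{Q}' = \beta(n(t)) \le \bar\beta < 1$, where $\bar\beta = \max_n \beta(n)$; this maximum is attained and is strictly less than $1$ because $\beta$ is continuous on the compact set $[0,N]$ with values in $(0,1)$. On $(0,\tau_e)$ we have $\hat{Q}' = -\gamma < 0$. Therefore, wherever $Q$ is differentiable,
\[
Q' \le \bar\beta + \eta \quad \text{in } [t_1,t_2] \subset (\tau_s,0), \qquad Q' = \hat{Q}' \le \bar\beta \quad \text{elsewhere}.
\]
Additionally requiring $\eta < 1-\bar\beta$ gives $Q' \le \kappa < 1$ almost everywhere, with $\kappa = \bar\beta+\eta$. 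Since $Q$ is continuous and piecewise $C^1$, it is absolutely continuous, so
\[
Q(t_b)-Q(t_a) = \int_{t_a}^{t_b} Q'(u)\,du \le \kappa\,(t_b-t_a) < t_b - t_a .
\]
This gives condition 3.

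Finally, set $\eta = \min\{\epsilon/h,\ (1-\bar\beta)/2\}$, which satisfies both requirements and completes the construction.
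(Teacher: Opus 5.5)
Your proposal is correct and follows essentially the same route as the paper: add a small scaled bump supported in a subinterval of $(\tau_s,0)$, with the scale bounded both by $\epsilon$ divided by the bump height and by the slope slack $1-\max\hat{Q}'$, so that positivity, support and the FIFO condition all survive. The differences are minor. You use a piecewise-linear tent instead of a $C^1$ bump, which is admissible since Definition~\ref{def:mcQ} only asks for piecewise $C^1$. You use the global bound $\bar\beta$ rather than the local maximum of $\hat{Q}'$. Your extra factor $1/2$ gives a uniform strict bound $Q'\le\kappa<1$, which makes your check of condition 3 slightly more careful than the paper's (its scale $(1-\hat d)/d$ by itself only gives $Q'\le 1$).
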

\begin{proof}
    Choose a closed interval during the early arrivals period, $[t_1, t_2] \subset (\tau_s, 0)$, over which $Q$ is continuously differentiable, and choose $e(t)$ to be a continuously differentiable non-negative function which is nonzero on the interior of this interval and zero elsewhere. Let $\hat{d} = \max_{t \in [t_1, t_2]} \{\hat{Q}'(t) \}$ and $d = \max_{t \in [t_1, t_2]} \{e'(t)\}$; by property 3 of Definition~\ref{def:mcQ} we have $\hat{d} < 1$.  Let $m = \max (e(t))$, and consider the profile $Q = \hat{Q} + e \cdot \min \{ \epsilon/m, (1 - \hat{d})/d \}$. Clearly $0 < d(Q, \hat{Q}) \le \epsilon$ as required. Furthermore, $Q \geq \hat{Q} > 0$ over int~$\mc{T}$ and $Q' < 1$, thus satisfying the conditions of Definition \ref{def:mcQ}. 
\end{proof}

\begin{lemma}
\label{lemma:PerturbationDivergence}
Let $Y$ be an ordered local pressure-related dynamic with a fixed point at the equilibrium $\hat{Q}$, let $Q_A \in \mc{Q}$ be any non-equilibrium queueing profile (i.e., $d(Q, \hat{Q})>0$), and let $Q_B = Y(Q_A$).  Then $d(Q_B, \hat{Q}) \geq d(Q_A, \hat{Q})$.
\end{lemma}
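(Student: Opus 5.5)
The plan is to pick a point where the deviation $e(t) = Q_A(t) - \hat{Q}(t)$ attains its largest absolute value. Near that point, the pressure condition~\eqn{pressure} forces $Q_B$ to be at least as far from $\hat{Q}$ as $Q_A$ is, so the maximum deviation cannot shrink.

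\textbf{Step 1: pressure in terms of $e$.} First I rewrite the local pressure in terms of $e$. For $t<0$ at which both $Q_A$ and $\hat{Q}$ are differentiable, \eqn{costfunction} and the equilibrium condition~\eqn{firstorder_Q} give
\[
C'_{n(t)}(t \mid Q_A) = Q_A'(t) - \beta(n(t)) = Q_A'(t) - \hat{Q}'(t) = e'(t).
\]
The same identity holds for $t>0$, using $+\gamma(n(t)) = -\hat{Q}'(t)$. This uses the fact that OLP dynamics keep the equilibrium order $n(t)$. Hence, away from the finitely many nondifferentiability points of $Q_A$ and $\hat{Q}$ (including $t=0$), condition~\eqn{pressure} reads
\[
\bigl[Q_B(t) - Q_A(t)\bigr]\, e'(t) \ge 0 .
\]

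\textbf{Step 2: choose the point $t_0$.} Let $M = d(Q_A,\hat{Q}) > 0$. Since $e$ is continuous on the compact interval $\mc{T}$ and vanishes at $\tau_s$ and $\tau_e$, $|e|$ attains $M$ at some interior point. Suppose $e$ attains the value $M$ (the case $-M$ is symmetric). Let $t_0$ be the \emph{smallest} such point, so $t_0 > \tau_s$ and $e(t) < M$ for all $t \in [\tau_s, t_0)$.

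\textbf{Step 3: find nearby points with positive pressure.} Fix $\eta>0$. By continuity, choose $t < t_0$ close enough that $e > M - \eta$ on $[t, t_0]$. Since $e(t_0) - e(t) > 0$ and $e$ is piecewise continuously differentiable, the mean value theorem applied on one of the finitely many smooth pieces of $[t,t_0]$ gives a point $s \in (t,t_0)$ with $e'(s) > 0$. We may take $s$ to be a differentiability point of both profiles. Then Step 1 yields $Q_B(s) \ge Q_A(s)$, so
\[
Q_B(s) - \hat{Q}(s) \ge e(s) > M - \eta .
\]
Letting $\eta \to 0$ produces a sequence $s_k \to t_0$ with $Q_B(s_k) - \hat{Q}(s_k) > M - \eta_k$ and $\eta_k \to 0$.

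\textbf{Step 4: pass to the limit.} Continuity of $Q_B \in \mc{Q}$ and of $\hat{Q}$ gives $Q_B(t_0) - \hat{Q}(t_0) \ge M$, hence $d(Q_B,\hat{Q}) \ge M$. In the negative case, take the leftmost minimizer of $e$. Nearby points then have $e'(s) < 0$, which forces $Q_B(s) \le Q_A(s)$, and the same limit argument gives $\hat{Q}(t_0) - Q_B(t_0) \ge M$.

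\textbf{Main obstacle.} The delicate point is Step 3. At the extremum itself $e'(t_0)=0$ or $e$ is not differentiable there, so the pressure condition gives no information at $t_0$. Choosing the \emph{leftmost} extremizer is what guarantees that strictly increasing stretches of $e$, with $e'>0$, exist arbitrarily close to $t_0$ at heights arbitrarily close to $M$. This choice also avoids the case where $e$ is constant at level $M$ on an interval, where $e'=0$ throughout. The kink of $\hat{Q}$ at $0$ needs care but is harmless: if $t_0 = 0$, the approximating points $s$ lie to its left, where the early formula applies.
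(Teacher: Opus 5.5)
Your proof is correct and follows the same route as the paper. Both take the first time $t^*$ at which $|e|$ attains its maximum, rewrite the pressure as $e'(t)$ via the equilibrium first-order condition, and then combine the pressure condition just to the left of $t^*$ with continuity of $Q_B$ to show the maximal deviation cannot shrink. Your mean-value-theorem Step 3 is more careful than the paper's version: the paper asserts $e'>0$ on a whole left neighborhood $(t^*-\delta,t^*)$, which can fail for $C^1$ deviations such as $e(t)=M-(t^*-t)^3\bigl(2+\sin\tfrac{1}{t^*-t}\bigr)$, whereas you only need, and correctly obtain, points with $e'>0$ and $e$ close to $M$ arbitrarily near $t^*$.
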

\begin{proof}
Let $e = Q_A - \hat{Q}$ describe the initial perturbation from equilibrium. 
The distance from equilibrium is $ m = d(Q_A, \hat{Q}) = \max_{t \in \mc{T}} |e(t)| > 0$.
Denote the first time of maximal deviation by:
\begin{align}
    t^* = \min \left\{ t \in \mc{T} : |e(t)|= m \right\} 
\,.
\end{align}

Suppose first that $e(t^*)$ is positive, thus $e(t^*)=m>0$. We do not assume that $Q_A$ is differentiable at $t^*$, but since $Q_A$ is piecewise differentiable there exists $\delta>0$ such that $Q_A$ is differentiable over the open interval $(t^*-\delta, t^*)$. Since $e$ obtains its maximum at $t^*$ for the first time, $e'(t) > 0 \quad \forall t \in (t^*-\delta, t^*)$.
Note that
\begin{align}
\label{eqn:pressureexample}
C'_{n(t)}(t) = \hat{Q}'(t) + e'(t) - \beta(n(t)) = e'(t)
\  \forall t \in \hat{\mc{T}}(Q_A), t<0 \\
C'_{n(t)}(t) = \hat{Q}'(t) + e'(t) + \gamma(n(t)) = e'(t)
\  \forall t \in \hat{\mc{T}}(Q_A), t>0
\,,
\end{align}
with the last equality in both rows following because $\hat{Q}$ is an equilibrium satisfying~\eqn{firstorder_Q}.

Using \eqn{pressure} we get that
\begin{align}
\label{eqn:der_C}
C'_{n(t)}(t) &> 0 & \forall t \in (t^*-\delta, t^*), \\
Q_B(t) - Q_A(t) &\geq 0 & \forall t \in (t^*-\delta, t^*).
\end{align}
Since $Q_A$ and $Q_B$ are continuous, we conclude that 
\begin{align}
    Q_B(t^*) &\geq Q_A(t^*) \\
    d(Q_B, \hat{Q}) &\geq Q_B(t^*) - \hat{Q}(t^*) \geq Q_A(t^*) - \hat{Q}(t^*) = m = d(Q_A, \hat{Q}).
\label{eqn:d_increase}
\end{align}

If $e(t^*)<0$ a similar argument applies, with the appropriate modifications in \eqn{der_C}--\eqn{d_increase}.
\end{proof}

\begin{theorem}
\label{thm:unstable}
For any problem instance ($F_\beta$, $F_\gamma$, $N$, $s$), the unique equilibrium profile $\hat{Q}$ is unstable with respect to the class of OLP dynamics.
\end{theorem}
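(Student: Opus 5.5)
The plan is to show that no OLP mapping $Y$ with fixed point $\hat{Q}$ is converging. By the definition of stability, that is exactly what instability with respect to the OLP class means. So fix an arbitrary OLP dynamic $Y$ with $Y(\hat{Q})=\hat{Q}$ and an arbitrary candidate radius $\epsilon>0$. We must produce some $Q$ with $d(Q,\hat{Q})<\epsilon$ whose iterates $Y^k(Q)$ fail to converge to $\hat{Q}$.

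First, invoke Lemma~\ref{lemma:PerturbationExistence} with $\epsilon/2$. This gives a profile $Q_0\in\mc{Q}$ with $0<d(Q_0,\hat{Q})\le \epsilon/2<\epsilon$. Set $m_0=d(Q_0,\hat{Q})>0$ and define the orbit $Q_k=Y^k(Q_0)$. Since $Y$ maps $\mc{Q}$ into $\mc{Q}$, every $Q_k$ lies in $\mc{Q}$ and is therefore an admissible input to Lemma~\ref{lemma:PerturbationDivergence}.

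Next, prove by induction on $k$ that $d(Q_k,\hat{Q})\ge m_0$ for all $k$. The base case is immediate. For the inductive step, suppose $d(Q_k,\hat{Q})\ge m_0>0$. Then $Q_k$ is a non-equilibrium profile in $\mc{Q}$, so Lemma~\ref{lemma:PerturbationDivergence} gives $d(Q_{k+1},\hat{Q})\ge d(Q_k,\hat{Q})\ge m_0$. The positivity of the distance must be carried along in the induction, because the lemma requires a non-equilibrium input. This is why we keep the uniform bound $m_0$ and not merely monotonicity.

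It follows that $d(Y^k(Q_0),\hat{Q})\ge m_0>0$ for every $k$. Hence the sequence does not converge to $\hat{Q}$ in the metric $d$. Since $\epsilon$ was arbitrary, $Y$ is not converging, and since $Y$ was an arbitrary OLP dynamic with fixed point $\hat{Q}$, the class contains no converging mapping. Therefore $\hat{Q}$ is unstable.

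The only delicate point is the applicability of Lemma~\ref{lemma:PerturbationDivergence} at every step. That lemma's proof uses differentiability of $Q_A$ on a left neighborhood of $t^*$, together with $t^*>\tau_s$. The latter holds because $e$ vanishes at $\tau_s$ for all profiles in $\mc{Q}$, by condition 1 of Definition~\ref{def:mcQ}. Both requirements are guaranteed by membership in $\mc{Q}$, and the closure $Y(\mc{Q})\subseteq\mc{Q}$ is part of the definition of an ordered dynamic. I expect no real obstacle beyond stating this closure explicitly.
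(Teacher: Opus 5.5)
Your proposal is correct and follows the paper's proof essentially step for step: you obtain a small nonzero perturbation from Lemma~\ref{lemma:PerturbationExistence}, then apply Lemma~\ref{lemma:PerturbationDivergence} inductively so the distance never drops below the initial deviation $m_0>0$. The only differences are cosmetic. You argue directly rather than by contradiction, you use $\epsilon/2$ to secure the strict inequality in the definition of converging (which the paper glosses over), and you make explicit the positivity carried through the induction and the closure $Y(\mc{Q})\subseteq\mc{Q}$.
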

\begin{proof}
Suppose by contradiction that there exists $\epsilon > 0 $ and a pressure-related mapping $Y$ such that for any queueing profile $Q$ with $d(Q, \hat{Q}) \leq \epsilon$ an iterative application of the mapping converges to the equilibrium, i.e., $\lim_{k \to \infty} d(Y^k(Q), \hat{Q}) =0$.  According to Lemma \ref{lemma:PerturbationExistence} there exists a queueing profile $Q_0 \in \mc{Q}$ such that $d(Q, \hat{Q}) = \epsilon_1 < \epsilon$ where $\epsilon_1>0$. Let $Q_{k+1} = Y(Q_k)$ for $k \in \mathbb{N}$. By Lemma \ref{lemma:PerturbationDivergence} 
$d(Q_{k+1}, \hat{Q}) \geq d(Q_k, \hat{Q})$. Therefore, by induction, $d(Q_k, \hat{Q}) \geq \epsilon_1$ for all $k \in \mathbb{N}$, in contradiction to the convergence assumption.
\end{proof}

\section{Discussion} 
\label{Sec:Discussion}

The results of Section \ref{Sec:InstabilityDiscrete} can be connected to several related questions regarding stability in other contexts, including:
the original model of Vickrey with
homogeneous travelers; % 1
non-linear scheduling penalty functions;  %2 
the dynamics if `days' are considered as a continuous dimension; % 3
the connection between equilibrium uniqueness and stability; %4
 non-local or non-ordered dynamics; %5 
equilibrium models with continuous choice, not necessarily related to departure time; % 6
general joint distribution of scheduling penalty parameters; % 7
and more. A thorough evaluation of each extension remains a subject for future research. This section will discuss briefly our current understanding of these directions.

The case of homogeneous travelers can be viewed as a limiting case of strictly monotone functions $\beta(n)$ and $\gamma(n)$, when the slope approaches zero. In that sense, the results presented here apply to models that approximate the homogeneous case as closely as we want. When considering dynamics for the homogeneous case, since all travelers are indistinguishable, we can re-sort them according to \eqref{eqn:nt} after every iteration. If the resulting mapping satisfies \eqref{eqn:pressure}, then Lemma \ref{lemma:PerturbationDivergence} holds and the instability result applies. However, many plausible day-to-day dynamics for the homogeneous case do not satisfy \eqref{eqn:pressure}, and thus their stability should be analyzed by other approaches.

The case of non-linear scheduling penalty functions implies that $\beta$ and $\gamma$ depend not only on the traveler but also on the actual time. This adds a slight complication to the notation, but otherwise should not affect the analysis.

The pressure-related mappings studied in Section \ref{Sec:InstabilityDiscrete} can be seen as discrete day-to-day dynamics. Specifically, if $Q$ is the queue delay profile on day $k=0$, then $Y^k(Q)$ is the profile on day $k\geq 1$. Previous work has established instability of the equilibrium in classical departure time choice models using continuous day-to-day dynamics, e.g., \cite{iryo08} and \cite{iryo19}. In particular, the change in the arrival profile is represented as a differential equation, and does not jump as the discrete update formula of $Y(Q)$ satisfying the pressure dynamics condition \eqref{eqn:pressure}.
A continuous counterpart of our dynamics can be constructed as follows. Let $Y^0=Q$ denote an initial profile and denote by $Y^\tau$ the profile on day $\tau\geq 0$. The continuous daily change of the profile at any $t\in\mathcal{T}$ is given by the differential equation
\begin{align}
\frac{\partial }{\partial \tau}Y^\tau(t) =a(t) C'_{n(t)} (t \mid Y^\tau).
\end{align}
As in the proof of Theorem~\ref{thm:unstable}, a profile that is a small perturbation from the equilibrium will have $\frac{\partial }{\partial \tau}Y^\tau(t^*)=0$ at the maximal deviation at time $t^*$, and thus will not move back towards the equilibrium profile.

The key difference of this instability result from previous work is in the local nature of the pressure-dynamics; at every time $t$ the change of the profile is only determined by the derivative of the cost at time $t$. This takes inspiration from gradient-descent algorithms. In the context of queueing games such dynamics are known to converge to the Nash equilibrium in many cases \citep{Ravner25}. In contrast, the dynamics of \cite{iryo08} involve comparing the cost at $t$ with the cost any time $\alttime\in\mathcal{T}$. Using our notation this entails computing $C_{n(t)}(t)-C_{n(t)}(\alttime)$ for all $\alttime\in\mathcal{T}$, and then the change of the profile at $t$ is given by a (weighted) integral of all travelers that can reduce their cost by moving to $t$. One might expect local dynamics to have more favorable stability properties, because travelers only consider local changes to their arrival time in response to a small perturbation and do not consider all possible arrival times.  In this sense, the instability result of Theorem~2 is strong.

The basic logic of Lemma \ref{lemma:PerturbationDivergence}, separating the additional pressure related to a perturbation from the zero pressure associated with the equilibrium profile may be valid in other situations, even if the equilibrium is not unique, or if the choice is continuous but not related to departure time.

Our main results were obtained assuming a special form of ``monotone'' heterogeneity in schedule delay parameters.  The case of a general joint distribution of penalty parameters can be explored in several ways. One approach is to start from an equilibrium solution for the monotone model, and identify a class of two-dimensional joint distributions for which the same queuing profile satisfies the equilibrium condition. This ``reverse engineering'' approach is demonstrated explicitly below, for an example with linear penalty functions.  (A full analysis of this approach is beyond the scope of the current article.) Using this approach, we can construct examples where the $\beta$ and $\gamma$ do not have a monotone relationship, but yet the aggregate equilibrium queueing profile is the same as if they did. We may also anticipate conditions for pressure-related dynamics to be mainly local. However, demonstrating  existence or uniqueness and evaluating stability for the general case of two-dimensional distributions requires substantial additional analysis.

The linear problem instance described at the end of Section~\ref{sec:model} enables a relatively simple characterization of two-dimensional parameter distributions for which the same queuing profile corresponds to an equilibrium state. To explore this direction, note that since $n(t) = 300(t - \tau_s)$, the penalty parameters can be expressed as a function of time,
$\beta(n(t)) = 0.01 + 0.45 (t - \tau_s) $, and $\gamma(n(t)) = 3 - 0.9 (t - \tau_s) $.
Letting $C^* = C_{n(0)}(0 \mid Q) = 0.225 \tau_s^2 - 0.01\tau_s$ we get,
$$C_{n(t)}(t \mid Q) = \begin{cases}
Q(t) - \beta(n(t)) t  = C^* - 0.225 t^2 & t \in [\tau_s, 0] \\
Q(t) + \gamma(n(t)) t = C^* - 0.45 t^2 & t \in [0, \tau_e]
\end{cases}
\,.$$ 
For any early arrival at $t<0$, a late arrival with an equal cost occurs at $t'=-{0.5}^{0.5}t$, which is obtained by solving $C_{n(t)}(t \mid Q)=C_{n(t')}(t' \mid Q)$. Note that $t\approx-0.594>\tau_s$ for $t'=\tau_e \approx 0.42$,
where $n(-0.594)=295.8$ and $\beta(295.8) = 0.4537$.

Now consider a new traveler, of negligible weight (cannot influence the queue), with penalty parameters $\beta_0, \gamma_0$. Geometrically, each such traveler can be represented as a point in the plane depicted in Figure~\ref{fig:division}. The support of the distribution of our monotonic example is shown in the figure as a blue line. 
We can identify a division line, shown in green in Figure~\ref{fig:division}, along which each traveler is indifferent between two options. In our case, this line consists of three sections, as described in Table~\ref{tab:transline}. In this table, the three sections are defined by their range of $\beta_0$, listed in the first column. The conditions in the second column define the connection between $\beta_0$ and $\gamma_0$. The two alternative travel times, with equal cost, are listed in the rightmost columns. At the end point of the division line, where $\beta_0 = \beta(n(0))$ and $\gamma_0 = \gamma(n(0))$, the traveler has only one preferred arrival time, $t=0$.  
Note that sections A and C are linear, while section B is quadratic. Sections B and C are clearly visible in Figure~\ref{fig:division}, while section A is too small to illustrate in this figure. 

\begin{table}
\begin{tabular}{|c|cccc|}
\hline
Section & $\beta_0$ Range & Conditions & Early AT & Late AT \\
\hline
\hline
A & $[0, 0.01]$ & $- \beta_0 \tau_s = \gamma_0 \tau_e$ & $\tau_s$ & $\tau_e$ \\
\hline
B & $[0.01,0.4537]$ &  
\begin{tabular}{c}
$\beta_0 = \beta(n(t))$\\
$\gamma_0 =  \sfrac{\left( C^* - 0.225 {t}^2 \right)}{\tau_e}$
\end{tabular}
& $t$ & $\tau_e$ \\
\hline
C & $[0.4537, \beta(n(0))]$ &
\begin{tabular}{l}
$\beta_0 = \beta(n(t))$\\
$\gamma_0 = \gamma(n(t'))$\\
$t'=-{0.5}^{0.5} t$
\end{tabular} &
$t$ & $t'$ \\
\hline
\end{tabular}
AT - Arrival Time
\caption{Division line equation for the linear case example.}\label{tab:transline}
\end{table}

The information in Table~\ref{tab:transline} defines the single preferred arrival time for a traveler with any other combination of penalty parameters, according to three cases:

\begin{enumerate}
\item 
If $\beta_0, \gamma_0$ is on the division line, associated with early arrival time at $t$, and $\gamma_1 > \gamma_0$, then a traveler with parameters $\beta_0, \gamma_1$ will prefer to arrive at $t$. These points fall along a vertical ray upwards.  Figure~\ref{fig:division} shows illustrative vertical rays that are intersecting with the support line (Case 1), while rays from section A of the division line are not shown.
\item  
If $\beta_0, \gamma_0$ is on the division line, associated with late arrival time at $t'$, and $\beta_1 > \beta_0$, then a traveler with parameters $\beta_1, \gamma_0$ will prefer to arrive at $t'$. These points fall along a horizontal ray to the right. Figure~\ref{fig:division} shows illustrative horizontal rays that are intersecting with the support line (Case 2), while rays from sections A and B of the division line are not shown.
\item 
Any traveler with parameters $\beta_1 > \beta(n(0))$ and $\gamma_1 > \gamma(n(0))$ will prefer to arrive on time at $t=0$. These points fall in a shifted quadrant. In Figure~\ref{fig:division} this quadrant is the empty region bounded by dashed blue lines (Case 3), with a corner at the top right end of the division line.
\end{enumerate}

%%%
\begin{figure}[t!]
  \centering
    \includegraphics[scale=0.7]{ 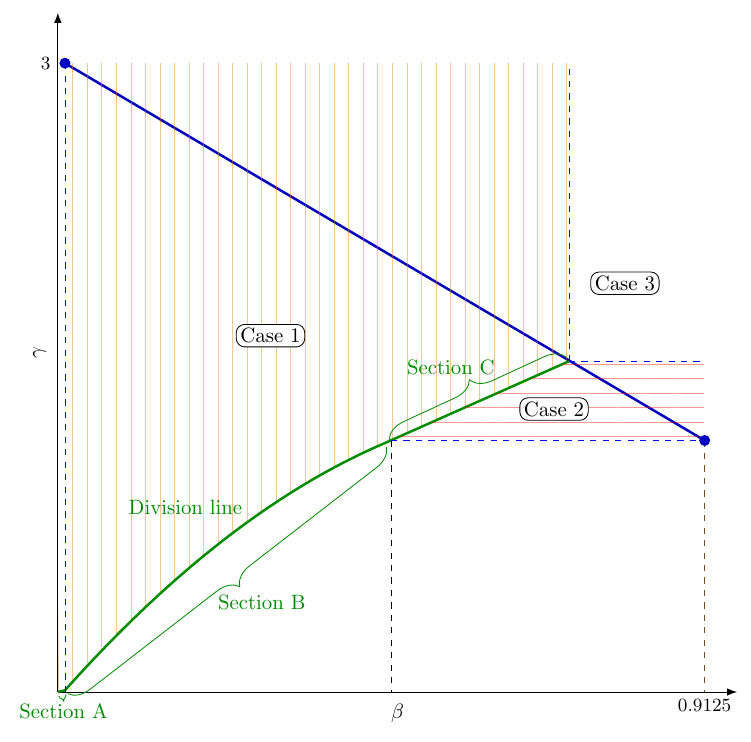}
    \caption{The division line. The marked areas correspond to parameter pairs such that the equilibrium solution remains unchanged.}\label{fig:division}
\end{figure}

We are now ready to examine whether the same queue profile can be an equilibrium for other general two-dimensional distributions of the penalty parameters.
Geometrically, this happens if the mass density of a point along the support line is shifted or spread within the intersecting ray (vertical or horizontal), shown in Figure~\ref{fig:division}. To formalize this idea, consider a general joint distribution of penalty parameters, described by the two-dimensional cumulative number of travelers,  $$\mathcal{N}(\beta_0, \gamma_0) = \left| \left\{ (\beta, \gamma) \mid \beta \leq \beta_0, \gamma \leq \gamma_0 \right\} \right|\, .$$ The same queue profile will be an equilibrium of such a general distribution if $\mathcal{N}(\beta(n(-{0.5}^{0.5} \tau_e)), \gamma(N))=0$, $\mathcal{N}(\beta(n(0)), \gamma(n(0)))=N$, and the partial derivatives along the division line do not change.
In section C the last condition can be formally stated as:
\begin{align*}
     \frac{\partial  \mathcal{N}(\beta(n(t)), \gamma(n(t')))} {\partial t} = \frac{\partial  n(t)} {\partial t} = S \quad & \forall t' \in [0, \tau_e] , t =-{0.5}^{0.5}t'\\
     \frac{\partial  \mathcal{N}(\beta(n(t)), \gamma(n(t')))} {\partial t'} = \frac{\partial  n(t')} {\partial t'} = S \quad & \forall t' \in [0, \tau_e] , t =-{0.5}^{0.5}t'.
\end{align*}
A similar formal statement can be made about section B.

A ``reverse engineering" approach of this type can be applied to any queuing profile obtained for the model defined in \ref{sec:model}, with monotonic relationship between $\beta$ and $\gamma$. Some additional attention may be needed if $\beta$ or $\gamma$ are not differentiable, but this is rather a technical concern. Note that this reverse engineering procedure does not yield a unique solution, but rather a whole class of two-dimensional distributions that share the same equilibrium.

This reverse engineering approach can be used to construct a valid equilibrium solution for cases where the connection between $\beta$ and $\gamma$ is not monotone. 
%Quote Index: R1_C10
In addition, if the distribution is ``thin" around the division line, pressure-related dynamics should be mainly local, and our stability results are likely to hold. 
However, this approach does not answer several important questions about general two-dimensional distributions, such as: a) division line identification; b) equilibrium existence; c) equilibrium uniqueness; and d) equilibrium stability. We leave these questions for future research.

Finally, it would be useful to determine whether our analysis can be extended to consider broader families of dynamics, including non-local and non-order preserving ones, or broader behavior assumptions involving equilibria with continuous choices, not necessarily related to departure time.  Exploring these directions remains a subject for future research.

\section{Concluding remarks}
This paper presents a new variant of the classic departure time choice model proposed by \cite{vickrey69}. In this variant, a single-dimensional continuous parameter represents traveler heterogeneity in earliness and lateness sensitivity. We presented a rigorous analysis of equilibrium solution existence and uniqueness. Given an instance problem with cumulative distribution functions reflecting the heterogeneity, the solution involves function integration and inversion; thus, in general, the equilibrium solution cannot be presented in closed form (except for special cases, such as linear CDFs). One key property of the equilibrium solution is that travelers depart and arrive in the \emph{a priori} order of their parameter. This result is not particularly surprising, considering that a similar sorting property has been demonstrated in variants with discrete heterogeneity (multi-class models), as well as in variants where desired arrival time is represented by a (single-dimensional) continuous parameter.  

Intuitively, we expected that the sorting property of the proposed model would simplify the identification of stable dynamics. Surprisingly, we discovered that this was not the case. In fact, we proved that the model is \emph{fundamentally unstable} in the sense that an entire family of dynamics, restricted by two plausible conditions, can not converge to the equilibrium solution. The conditions we imposed were: (a) the \emph{a priori} sorting order is maintained at all times; and (b) each traveler makes choices intending to decrease or maintain their individual cost of travel. We discussed the implications of these findings for other dynamics for the same model, as well as for other departure time choice models, including preliminary discussion of how our results can be generalized to some instances with two-dimensional heterogeneity, although many important questions remain for future study.

Part of the motivation for this work is the apparent gap between observed stability of traffic patterns in the field, and the lack of stability seen in most departure time dynamic processes that have been proposed.
Our results show that this gap remains: even when introducing and maintaining a unique departure order at equilibrium, travelers reacting to small perturbations based on local pressure at arrival time will act in a way that moves further from equilibrium.
While the stability challenge in departure time choice models remains an open question, we hope that our analysis contributes to its understanding and ultimate resolution.

\subsection*{Acknowledgements} This work was partially supported by the University Transportation Center for Understanding Future Travel Behavior and Demand  at the University of Texas (Austin), and by the Israel Science Foundation (ISF), grant no. 1361/23. 

\begin{small}
\begin{sloppypar} 
\bibliographystyle{authordate1} 

\setlength{\bibsep}{0pt}

\bibliography{references}

\end{sloppypar}
\end{small}

\appendix

\section{Alternative representations of traveler choices}
\label{sec:choicerepresentation}

Our model is defined in terms of arrival times, but it may be more natural to express a traveler's choice in terms of departure time.
The purpose of this appendix is to show how the basic behavioral choice we are modeling (departure time of an individual from a heterogeneous population) can be expressed in equivalent ways, some of which are easier to study.

The traveler choices and queueing behavior can be described in several ways.
\begin{enumerate}[label={[\arabic*]}]
\item The departure time for each traveler, $\tau(n)$.
(Here we require no ordering of departures, $n$ is an arbitrary traveler.)
\item The departure order, specified by a ``permutation'' $\phi : [0,N] \rightarrow [0,N]$.
The individual departure times [1] determine this function through $\phi(n) = \lambda \left( \left\{ n' : \tau(n') < \tau(n) \right\} \right) $ where $\lambda$ denotes the Lebesgue measure.
(As a regularity condition, we assume that $\tau(n)$ is chosen in a way that these sets are Lebesgue measurable.)
\item The cumulative departures by time $t$, $\nu_D(t)$. 
We can express this in terms of the departure times [1] as $\nu_D(t) = \lambda \left( \left\{ n : \tau(n) \leq t | \right\} \right)$. 
(We again impose as a regularity condition on $\tau(n)$ that these sets are Lebesgue measurable.)
\item The cumulative arrivals by time $t$, $\nu_A(t)$.
Cumulative arrivals can be calculated from cumulative departures [3] using the queuing dynamics, or explicitly as $\nu_A(t) = \inf_{t' < t} \{ \nu_D(t') + s(t -t ')\}$.
\item The index of the traveler arriving at time $t$, written $n(t)$, determined from the cumulative arrival profile [4] and the ordering [2]: $n(t) = \phi^{-1}(\nu_A(t))$, using the fact that the departure and arrival orders are the same. 
\item The arrival time of the traveler of index $n$, given by $t(n) = n^{-1}(t)$ as the inverse of [5].
\item The queuing delay experienced by a traveler arriving at time $t$, $Q(t) = t - \tau(n(t))$.
\item The departure time for a traveler arriving at time $t$, $D(t) = t - Q(t)$.
\end{enumerate}

In particular, [1] uniquely determines all of the other values [2--8], using the expressions given in the list above.
But it is difficult to characterize the equilibrium in terms of [1], because the queueing effects and delays are not explicit.
It is more convenient to work with the traveler arrival order [6] and the queueing delay profile [7], and we now show these variables are sufficient to fully describe the system and behavioral choices, so that the original departure times for each traveler [1] can be recovered.

\begin{proposition}
The arrival profile $t(n)$ and queueing delay profile $Q(t)$ (items [6] and [7] in the list) uniquely determine all other values in the list above.
\end{proposition}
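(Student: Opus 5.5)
The plan is to reconstruct item [1], the individual departure times $\tau(n)$, directly from $t(n)$ and $Q(t)$. Once [1] is recovered, the remark preceding the proposition shows that [1] uniquely determines [2]--[8], and the claim follows. The key identity is item [7] read backwards. Since $Q(t) = t - \tau(n(t))$ for every arrival time $t$, substituting $t = t(n)$ gives
\[
\tau(n) = t(n) - Q(t(n)), \qquad n \in [0,N].
\]
This formula uses only the two given functions. It requires that $n(t)$ and $t(n)$ are genuine inverses, which holds by the definition of [6] as the inverse of [5].

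Next I will check that this reconstruction is consistent, so that the recovered $\tau$ really generates the given $t(n)$ and $Q$. I will also verify each remaining item directly, which is cleaner than routing everything through [1].

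\begin{itemize}
\item Item [5] is $n(t) = t^{-1}(t)$.
\item Item [8] is $D(t) = t - Q(t)$.
\item For item [2], the departure and arrival orders coincide under FIFO. So $\phi(n)$ equals the measure of travelers arriving before $t(n)$, namely $\phi(n) = \lambda(\{n' : t(n') < t(n)\})$, which is computable from $t(n)$ alone.
\item Item [4] is $\nu_A(t) = \lambda(\{n : t(n) \le t\})$.
\item Item [3] is $\nu_D(t) = \lambda(\{n : t(n) - Q(t(n)) \le t\})$.
\end{itemize}

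Uniqueness then follows because every quantity above is written as an explicit function of $t(\cdot)$ and $Q(\cdot)$.

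The main obstacle is justifying that departure order equals arrival order, so that $\phi$ can be read off from $t(n)$ and the map $n \mapsto \tau(n)$ is consistent with the FIFO queue. I will establish this from the FIFO condition $Q(t_2) - Q(t_1) < t_2 - t_1$ for $t_1 < t_2$. This condition is equivalent to strict monotonicity of $D(t) = t - Q(t)$: if $t(n_1) < t(n_2)$, then $\tau(n_1) = D(t(n_1)) < D(t(n_2)) = \tau(n_2)$. So the arrival order induces the departure order, and conversely by strictness.

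Measurability is a secondary technicality. The sets defining $\phi$, $\nu_A$ and $\nu_D$ are preimages under $t(\cdot)$ or under $D \circ t$, with $D$ strictly increasing. They are therefore measurable whenever $t(n)$ is, which I will take as the standing regularity assumption, matching the one imposed on $\tau(n)$ in the list.
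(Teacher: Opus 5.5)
Your argument is correct but takes a more direct route than the paper. The paper argues through a chain of implications:
\begin{itemize}
\item [6]$\Rightarrow$[5];
\item [7]$\Rightarrow$[8];
\item [6]$\Rightarrow$[4];
\item [6]$\Rightarrow$[2], via $\phi(n)=\nu_A(t(n))$;
\item [4]+[7]$\Rightarrow$[3], by shifting the arrival curve left by the delay, $\nu_D(t-Q(t))=\nu_A(t)$.
\end{itemize}
Only at the end does it recover the individual departure times, as $\tau(n)=\nu_D^{-1}(\phi(n))$.

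You instead recover [1] at once as $\tau(n)=t(n)-Q(t(n))=D(t(n))$, which is item [7] evaluated at $t=t(n)$. You then get $\phi$, $\nu_A$ and $\nu_D$ by substituting this $\tau$ into their defining formulas.

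Your route has two advantages. First, it never inverts $\nu_D$. That function is flat across departure gaps caused by negative jumps of $Q$, so the paper tacitly needs a generalized inverse there. Second, it replaces the paper's bare assertion that FIFO makes departure order equal arrival order with an actual argument. For $t_1<t_2$, the condition $Q(t_2)-Q(t_1)<t_2-t_1$ is exactly strict monotonicity of $D$. The paper's route, in exchange, shows the standard cumulative-curve relation between arrivals and departures, which is the viewpoint it uses afterwards when discussing $\nu_D^k$ and flux.

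One small point: drop the announced ``consistency'' check, that the recovered $\tau$ regenerates the given $t(n)$ and $Q$ through the bottleneck. Uniqueness does not need it. Your FIFO argument only establishes consistency of the order. Full realizability would also require arrivals at rate $s$ whenever $Q>0$.
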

\begin{proof}
We use implication arrows to express relationships among the above quantities; [i] $\Rightarrow$ [j] means that item [i] in the list is sufficient to determine item [i], and so forth.
The notation [i] + [j] $\Rightarrow$ [k] means that items [i] and [j] together are sufficient to determine [k].
Using this notation, we have the following:
\begin{description}
\item{[6] $\Rightarrow$ [5]:} By definition $n(t)$ and $t(n)$ are inverse functions, so knowing one is equivalent to knowing the other.
\item{[7] $\Rightarrow$ [8]:} We have $D(t) = t - Q(t)$ and $Q(t) = t - D(t)$.
\item{[6] $\Rightarrow$ [4]:} From the definition of cumulative arrivals, $\nu_A(t) = \lambda \left( \left\{ n: t(n) \leq t \right\} \right)$.
\item{[6] $\Rightarrow$ [2]:} The queue dynamics satisfy first-in, first-out, so the arrival order given by $n(t)$ is also the departure order: we can also write $\phi(n) = \nu_A(t(n))$
\item{[4] + [7] $\Rightarrow$ [3]:} Given the cumulative arrival curve and queueing delay, construct the cumulative departure curve by shifting the arrival curve to the left by the queueing delay: $\nu_D(t - Q(t)) = \nu_A(t)$.
\item{[3] + [2] $\Rightarrow$ [1]:} Apply the permutation $\phi$ to determine where a particular traveler is in the departure order, and apply the inverse cumulative departure curve: $\tau(n) = \nu_D^{-1}(\phi(n))$.
\end{description}
Conclusion: given [6] and [7], we can compute the values of all other quantities in the list, so no information is lost by analyzing the equilibrium in terms of $t(n)$ and $Q(t)$.
\end{proof}

Using this extended notation, we can reconsider the impact of order on dynamics from a different perspective. Let $k$ be a discrete index representing the dynamic evolution of the solution over steps, or `days', where $k=0$ represents the initial solution.

Consider the following corresponding notations, to refer to the state of the solution at each step:
\begin{enumerate}
\item $Q^k(t)$, queue time of a traveler arriving at $t$
\item $t^k(n)$, arrival time of traveler $n$
\item $\tau^k(n)$, departure time of traveler $n$
\item $\nu_D^k(\tau)$, cumulative quantity of departures by time $\tau$
\end{enumerate}

Suppose that $C'_{n} (t \mid Q^k)$ is positive at $t^k(n)$. Consider an individual traveler of negligible impact, i.e., a traveler who assumes that her own decisions will not modify $Q$. Assuming that such a traveler is rational, and seeking lower cost decisions, we may expect her to depart earlier, i.e., $\tau^{k+1}(n) \leq \tau^{k}(n)$. By departing earlier, the traveler may expect to arrive earlier as well, say at time $t' \leq t^k(n)$, such that $t'-Q^k(t')=\tau^{k+1}(n)$, and experience a lower cost of $C_{n} (t' \mid Q^k) \leq C_{n} (t \mid Q^k)$. If traveler $n$ had to vote about an option to depart at $\tau^{k+1}(n)$ based on this analysis, the vote would be positive.

However, other travelers may also change their departure time during the same step (iteration/day). The question we focus on is what happens if they do so in a way that does not modify their order. Under this assumption, 
$\nu_D^k(\tau^k(n))=n$,
$\nu_D^{k+1}(\tau^{k+1}(n))=n$, but since $\tau^{k+1}(n) \leq \tau^{k}(n)$ we get that the number of cumulative arrivals increase, i.e., $\nu_D^{k+1}(\tau^{k}(n)) \geq n$.

Considering a specific departure time $\tau$, we can define the flux through $\tau$ at iteration $k$ as the net number of travelers changing their departure time from being before $\tau$ to being after $\tau$, or formally, $\psi^k(\tau)=\nu_D^{k}(\tau)-\nu_D^{k+1}(\tau)$. The pressure-related dynamics suggests that when $C'_{n} (t \mid Q^k)$ is positive, the flux will be negative.

In hindsight, once the decisions of all travelers are realized, our myopic traveler may regret the decision to depart earlier. If the order remains the same, the arrival time remains the same $t^{k+1}(n)=t^{k}(n)=t$, thus scheduling costs remain the same, and the only change is a longer queueing time, since $Q^{k+1}(t)=t-\tau^{k+1}(n) \geq t-\tau^{k}(n) = Q^{k}(t)$. This outcome is perhaps counterintuitive, but it stands at the core of the concern for stability we are exploring in this study.

\section{Existence of OLP dynamics}
\label{sec:olpexistence}
This appendix discusses the existence of pressure-related mappings and provides an example of one. It is relatively intuitive to consider the derivatives as a ``direction" that can be scaled to determine the next iteration. Specifically, given a constant scaling factor, $a>0$, we may consider the mapping
\begin{equation}
\label{eqn:map1}
[Y(Q)](t) = Q(t) + a \cdot C'_{n(t)} (t \mid Q)
\,.
\end{equation}
There are two potential problems with such mappings: (a) there may be points where $Q \in \mc{Q}$ is not differentiable, meaning that continuity of the resulting profile $Y(Q)$ cannot be ensured; and (b) the conditions $[Y(Q)](t) >0$ for all $t \in \mathrm{int}~\mc{T}$, and $Q(t_2) - Q(t_1) < t_2 - t_1$ for $\tau_s \leq t_1 < t_2 < \tau_e$, are not trivial to enforce. There may be ways to restrict the state space even further, so that mappings of the form \eqn{map1} will remain within their domain, but we were not able to identify a plausible set of restrictions to do so. 

 To demonstrate that continuity of $Y(Q)$ for any $Q \in \mc{Q}$ can be obtained, we define $D(t \mid Q)$ as the distance from $t$ to the nearest non-differentiable point in $Q$. 
 Using a lower limit fraction $0<b<1$, a smoothing parameter $\alpha>0$, and a positive scaling parameter $a>0$, we consider the mapping:
\begin{align}
\label{eqn:map2}
[\tilde{Y}(Q)](t) & = \left\{ 
\begin{array}{ll}
Q(t) + \left( 1 -  e^{-\alpha \cdot D(t \mid Q ) } \right) \cdot a \cdot C'_{n(t)} (t \mid Q)  & \forall t \in \hat{\mc{T}} \\
Q(t) & \forall t \in \mc{T} \setminus \hat{\mc{T}} 
\end{array} \right.  \\
[Y(Q)](t) &= \max \left\{ b \cdot Q(t),[\tilde{Y}(Q)](t) \right\} \quad \forall t \in \mc{T}
\label{eqn:map3}
\end{align}
The lower limit fraction ensures that $Y(Q)>0$. The smoothing element $\left( 1 -  e^{-\alpha \cdot D(t \mid Q ) } \right)$ implies that if $t_0 \in \mc{T} \setminus \hat{\mc{T}}$ is a non-differentiable point, then $\lim_{t \to t_0} [\tilde{Y}(Q)](t)=Q(t)$. Therefore, by setting $[\tilde{Y}(Q)](t)=Q(t)$ continuity is maintained. Of course, \eqn{map2} and \eqn{map3} are just one example for a mapping that satisfies the conditions.

\end{document}